
\documentclass[12pt]{article}    

%
\usepackage[margin=0.75in]{geometry} 
\usepackage{amsmath,amssymb,amsthm}

%
%

%
%
\newtheorem{theorem}{Theorem}[section]
\newtheorem{proposition}[theorem]{Proposition}

\newcommand{\be}{\begin{equation}}
\newcommand{\ee}{\end{equation}}

\newcommand{\bea}{\begin{eqnarray}}
\newcommand{\eea}{\end{eqnarray}}


\numberwithin{equation}{section}
\linespread{1.6} 

\begin{document}

\title{Painlev\'{e} V, Painlev\'{e} XXXIV and the Degenerate Laguerre Unitary Ensemble}
\author{Chao Min\thanks{School of Mathematical Sciences, Huaqiao University, Quanzhou 362021, China; e-mail: chaomin@hqu.edu.cn} and Yang Chen\thanks{Correspondence to: Yang Chen, Department of Mathematics, Faculty of Science and Technology, University of Macau, Macau, China; e-mail: yangbrookchen@yahoo.co.uk}}


\date{\today}
\maketitle
\begin{abstract}
In this paper, we study the Hankel determinant associated with the degenerate Laguerre unitary ensemble. This problem originates from the largest or smallest eigenvalue distribution of the degenerate Laguerre unitary ensemble. We derive the ladder operators and its compatibility condition with respect to a general perturbed weight. By applying the ladder operators to our problem, we obtain two auxiliary quantities $R_n(t)$ and $r_n(t)$ and show that they satisfy the coupled Riccati equations, from which we find that $R_n(t)$ satisfies the Painlev\'{e} V equation. Furthermore, we prove that $\sigma_{n}(t)$, a quantity related to the logarithmic derivative of the Hankel determinant, satisfies both the continuous and discrete Jimbo-Miwa-Okamoto $\sigma$-form of the Painlev\'{e} V. In the end, by using Dyson's Coulomb fluid approach, we consider the large $n$ asymptotic behavior of our problem at the soft edge, which gives rise to the Painlev\'{e} XXXIV equation.
\end{abstract}

$\mathbf{Keywords}$: Hankel determinant; Degenerate Laguerre unitary ensemble; Ladder operators;

Orthogonal polynomials; Painlev\'{e} equations; Asymptotics.

$\mathbf{Mathematics\:\: Subject\:\: Classification\:\: 2010}$: 15B52, 42C05, 33E17.

\section{Introduction}
In random matrix theory, it is well known that the partition function of a unitary ensemble is given by \cite{Mehta}
$$
\Delta_{N}[w_{0}]:=\frac{1}{N!}\int_{[a,b]^N}\prod_{1\leq i<j\leq N}(x_{i}-x_{j})^2\prod_{k=1}^{N}w_{0}(x_{k})dx_{k},
$$
where $\{x_{j}\}_{j=1}^{N}$ are the eigenvalues of $N\times N$ Hermitian matrices from the unitary ensemble, and $w_{0}(x)$ is a weight function supported on an interval $[a,b]$.

For the generic Laguerre unitary ensemble, $w_{0}(x)=x^{\alpha}\mathrm{e}^{-x},\;\alpha>-1,\;x\in [0,\infty)$.
In the case of a single degenerate eigenvalue $t$ with $K$ fold degeneracy and the rest $n$ eigenvalues, which are also denoted by $\{x_{j}\}_{j=1}^{n}$ for simplicity, are distinct, such that $N=n+K$, we find the partition function reads,
$$
\Delta_{n+K}[w_{0}]=\int_{0}^{\infty}t^{K\alpha}\mathrm{e}^{-Kt}D_{n}(t)dt,
$$
where
$$
D_{n}(t):=\frac{1}{n!}\int_{[0,\infty)^n}\prod_{1\leq i<j\leq n}(x_{i}-x_{j})^2\prod_{l=1}^{n}(x_{l}-t)^{2K}x_{l}^{\alpha}\mathrm{e}^{-x_{l}}dx_{l}.
$$
It is well known that $D_{n}(t)$ can be expressed as the following Hankel determinant \cite{Mehta},
$$
D_{n}(t)=\det\left(\int_{0}^{\infty}x^{i+j}x^{\alpha}\mathrm{e}^{-x}|x-t|^{2K}dx\right)_{i,j=0}^{n-1}.
$$
We mention that Chen and Feigin \cite{ChenFeigin} studied this kind of degenerate unitary ensemble but for the Gaussian case.

More generally, we consider the Hankel determinant generated by the perturbed Laguerre weight, namely,
\be\label{dnt}
\mathcal{D}_{n}(t):=\det\left(\int_{0}^{\infty}x^{i+j}w(x,t)dx\right)_{i,j=0}^{n-1},
\ee
where
$$
w(x,t):=x^{\alpha}\mathrm{e}^{-x}|x-t|^{\gamma}(A+B\theta(x-t)),\;\; x\geq 0,\;t\geq 0,\;\alpha>0,\; \gamma>0.
$$
Here $\theta(x)$ is the Heaviside step function, i.e., $\theta(x)$ is 1 for $x>0$ and 0 otherwise; $A$ and $B$ are constants and $A\geq 0,\; A+B\geq 0$.

We would like to point out some special cases of our problem. If $\gamma=2K,\;A=1,\;B=0$, the Hankel determinant $\mathcal{D}_{n}(t)$ is related to the partition function of the degenerate Laguerre unitary ensemble (dLUE); if $\gamma=2K,\;A=0,\;B=1$, it allows us to compute the probability that all the eigenvalues are not less than $t$ in the dLUE; if $\gamma=2K,\;A=1,\;B=-1$, it corresponds to the probability that all the eigenvalues are not greater than $t$ in the dLUE. Furthermore, the $\gamma=0$ case has been studied by Basor and Chen \cite{Basor2009}. They also investigated the Hankel determinant generated by the weight $\tilde{w}(x,t)=x^{\alpha}\mathrm{e}^{-x}(x+t)^{\gamma},\;x>0,\;t>0$, which is related to the information theory of MIMO wireless systems \cite{Basor2015}. Note that the problem on the weight $\tilde{w}(x,t)$ is different from ours, since our weight vanishes at a singular point $t$ in the interior of the support. Finally, we mention that for $\gamma>-1$, the weight $w(x,t)$ is called the Laguerre weight with a Fisher-Hartwig singularity \cite{Fisher}. Rencently, Wu, Xu and Zhao \cite{Wu} studied the Hankel determinant for the Gaussian weight perturbed by a Fisher-Hartwig singularity at the soft edge.

We now introduce some elementary facts about the orthogonal polynomials. Let $P_{n}(x,t)$ be the monic polynomials of degree $n$ orthogonal with respect to the weight $w(x,t)$,
\be\label{ops}
\int_{0}^{\infty}P_{m}(x,t)P_{n}(x,t)w(x,t)dx=h_{n}(t)\delta_{mn},\;\;m, n=0,1,2,\ldots.
\ee
We write $P_{n}(x,t)$ in the following expansion form,
\be\label{expan}
P_{n}(x,t)=x^{n}+\mathrm{p}(n,t)x^{n-1}+\cdots,
\ee
and we will see that $\mathrm{p}(n,t)$, the coefficient of $x^{n-1}$, plays a significant role in the following discussions.

For the orthogonal polynomials $P_{n}(x,t)$, we have the three-term recurrence relation \cite{Szego,Chihara}
\be\label{rr}
xP_{n}(x,t)=P_{n+1}(x,t)+\alpha_{n}(t)P_{n}(x,t)+\beta_{n}(t)P_{n-1}(x,t)
\ee
with the initial conditions
$$
P_{0}(x,t)=1,\;\;\beta_{0}(t)P_{-1}(x,t)=0.
$$
An easy consequence of (\ref{ops}), (\ref{expan}) and (\ref{rr}) gives
\be\label{al}
\alpha_{n}(t)=\mathrm{p}(n,t)-\mathrm{p}(n+1,t)
\ee
and
\be\label{be}
\beta_{n}(t)=\frac{h_{n}(t)}{h_{n-1}(t)}.
\ee
A telescopic sum of (\ref{al}) yields
\be\label{sum}
\sum_{j=0}^{n-1}\alpha_{j}(t)=-\mathrm{p}(n,t).
\ee
Finally, it is well known that \cite{Ismail}
\be\label{hankel}
\mathcal{D}_{n}(t)=\prod_{j=0}^{n-1}h_{j}(t).
\ee

The rest of this paper is organized as follows. In Sec. 2, we derive the ladder operators and the compatibility conditions with respect to the weight
$w(x):=w_{0}(x)|x-t|^{\gamma}(A+B\theta(x-t))$, where $w_{0}(x)$ is a general smooth weight. In Sec. 3, we apply the ladder operators for the general case to the perturbed Laguerre weight and obtain some important identities on the auxiliary quantities $R_n(t)$ and $r_n(t)$. In Sec. 4, we show that $R_n(t)$ and $r_n(t)$ satisfy the coupled Riccati equations, which give rise to the Painlev\'{e} V equation satisfied by $R_n(t)$. We also prove that a quantity $\sigma_{n}(t)$, allied to the logarithmic derivative of the Hankel determinant, satisfies both the continuous and discrete $\sigma$-form of the Painlev\'{e} V.
In Sec. 5, we consider the large $n$ asymptotics of our problem at the edge, from which the Painlev\'{e} XXXIV equation appears.

\section{Ladder Operators and Compatibility Conditions}
In the following discussions, for convenience, we shall not display the $t$ dependence in $P_{n}(x)$, $w(x)$, $h_{n}$, $\alpha_{n}$ and $\beta_{n}$ unless it is needed.
\begin{theorem}\label{lo}
Let $w_{0}(x)$ be a smooth weight function defined on $[a,b]$, and $w_{0}(a)=w_{0}(b)=0$. The monic orthogonal polynomials with respect to $w(x):=w_{0}(x)|x-t|^{\gamma}(A+B\theta(x-t)),\; t\in[a,b],\; \gamma>0$ satisfy the lowering operator equation
\be\label{lowering}
\left(\frac{d}{dz}+B_{n}(z)\right)P_{n}(z)=\beta_{n}A_{n}(z)P_{n-1}(z),
\ee
where
$$
A_{n}(z):=\frac{1}{h_{n}}\int_{a}^{b}\frac{\mathrm{v}_{0}'(z)-\mathrm{v}_{0}'(y)}{z-y}P_{n}^{2}(y)w(y)dy+a_{n}(z,t),
$$
$$
a_{n}(z,t):=\frac{\gamma}{h_{n}}\int_{a}^{b}\frac{P_{n}^{2}(y)}{(z-y)(y-t)}w(y)dy;
$$
$$
B_{n}(z):=\frac{1}{h_{n-1}}\int_{a}^{b}\frac{\mathrm{v}_{0}'(z)-\mathrm{v}_{0}'(y)}{z-y}P_{n}(y)P_{n-1}(y)w(y)dy+b_{n}(z,t),
$$
$$
b_{n}(z,t):=\frac{\gamma}{h_{n-1}}\int_{a}^{b}\frac{P_{n}(y)P_{n-1}(y)}{(z-y)(y-t)}w(y)dy
$$
and $\mathrm{v}_{0}(z)=-\ln w_{0}(z)$.
\end{theorem}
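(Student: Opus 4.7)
The plan is to follow the classical Chen--Ismail derivation of ladder operators for orthogonal polynomials, adapted to accommodate the Fisher--Hartwig-type factor $|y-t|^\gamma(A+B\theta(y-t))$. First, since $P_n'(z)$ has degree $n-1$, I would expand
\[ P_n'(z) = \sum_{k=0}^{n-1} c_{nk} P_k(z), \qquad c_{nk} = \frac{1}{h_k}\int_a^b P_n'(y) P_k(y) w(y) dy. \]
The key step is an integration by parts, carried out separately on $[a,t]$ and $[t,b]$ to isolate the singular point. The boundary contributions should all vanish: at $y=a,b$ because $w_0(a)=w_0(b)=0$, and at $y=t^{\pm}$ because $|y-t|^\gamma\to 0$ from both sides when $\gamma>0$, so that the jump in the factor $A+B\theta(\cdot)$ contributes no surviving residue either.

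Away from $y=t$ the logarithmic derivative is $w'(y)/w(y) = -\mathrm{v}_0'(y) + \gamma/(y-t)$. Combined with the vanishing $\int P_n(y) P_k'(y) w(y) dy = 0$ for $k \le n-1$ (orthogonality, since $\deg P_k' \le n-2$), integration by parts will produce
\[ c_{nk} h_k = \int_a^b P_n(y) P_k(y)\left[\mathrm{v}_0'(y) - \frac{\gamma}{y-t}\right] w(y) dy, \]
in which the $(y-t)^{-1}|y-t|^\gamma$ singularity is absolutely integrable because $\gamma>0$.

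Next I would compute $\beta_n A_n(z) P_{n-1}(z) - B_n(z) P_n(z)$ directly from the stated integral formulas, using $\beta_n = h_n/h_{n-1}$ to reconcile the $1/h_n$ and $1/h_{n-1}$ normalizations. The combination $P_{n-1}(z)P_n(y) - P_n(z)P_{n-1}(y)$ that emerges equals $-h_{n-1}(z-y)K_n(z,y)$ by the Christoffel--Darboux formula, where $K_n(z,y) := \sum_{k=0}^{n-1} P_k(z)P_k(y)/h_k$; the $(z-y)$ factor then cancels the denominator in both the $\mathrm{v}_0'$ and $(y-t)^{-1}$ pieces, leaving
\[ \beta_n A_n(z) P_{n-1}(z) - B_n(z) P_n(z) = -\int_a^b K_n(z,y) P_n(y)\left[\mathrm{v}_0'(z) - \mathrm{v}_0'(y) + \frac{\gamma}{y-t}\right] w(y) dy. \]
The $\mathrm{v}_0'(z)$ piece vanishes by orthogonality of $P_n$ against $P_0,\ldots,P_{n-1}$, and expanding $K_n$ in what remains reconstitutes $\sum_{k=0}^{n-1} c_{nk} P_k(z) = P_n'(z)$. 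Rearranging gives the asserted lowering relation.

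The main obstacle will be the careful bookkeeping at $y=t$: justifying the integration-by-parts split across the singularity, confirming that the discontinuity in $A+B\theta(\cdot)$ contributes no boundary residue, and verifying that the resulting improper integrals (with density $\sim|y-t|^{\gamma-1}$) converge. The hypothesis $\gamma>0$ handles all of this in one stroke. Once the singular analysis is secured, the remainder is routine orthogonality plus Christoffel--Darboux.
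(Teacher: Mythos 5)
Your proposal is correct and takes essentially the same route as the paper's proof: expand $P_n'(z)$ in the orthogonal basis, integrate by parts using $w_0(a)=w_0(b)=0$ together with $\gamma>0$ to eliminate all boundary contributions, arrive at $w'(y)/w(y)=-\mathrm{v}_{0}'(y)+\gamma/(y-t)$, insert $\mathrm{v}_{0}'(z)$ for free by orthogonality, and reassemble via the Christoffel--Darboux formula. The only cosmetic differences are that you split the integral at $y=t$ and check the boundary terms there explicitly, where the paper instead differentiates $|y-t|^{\gamma}$ distributionally (the resulting $\delta(y-t)$ terms being annihilated by the vanishing factor $|y-t|^{\gamma}$), and that you verify the identity in the direction $\beta_nA_n(z)P_{n-1}(z)-B_n(z)P_n(z)=P_n'(z)$ rather than building it up from the expansion of $P_n'(z)$; both computations are the same argument read in opposite directions.
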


\begin{proof}
Since $P_{n}(z)$ is a polynomial of degree $n$, we have
$$
P_{n}'(z)=\sum_{k=0}^{n-1}C_{n,k}P_{k}(z)
$$
and the coefficient
\be\label{cnk}
C_{n,k}=\frac{1}{h_{k}}\int_{a}^{b}P_{n}'(y)P_{k}(y)w(y)dy.
\ee
After integration by parts and noting that $w(a)=w(b)=0$, we find
\bea\label{cnk1}
C_{n,k}&=&-\frac{1}{h_{k}}\int_{a}^{b}P_{n}(y)P_{k}(y)w'(y)dy\nonumber\\
&=&-\frac{1}{h_{k}}\int_{a}^{b}P_{n}(y)P_{k}(y)\bigg\{w_{0}'(y)|y-t|^{\gamma}(A+B\theta(y-t))dy+w_{0}(y)|y-t|^{\gamma}B\delta(y-t)\nonumber\\
&+&w_{0}(y)\bigg[\delta(y-t)((y-t)^{\gamma}-(t-y)^{\gamma})+\gamma\frac{|y-t|^{\gamma}}{y-t}\bigg](A+B\theta(y-t))\bigg\}dy\nonumber\\
&=&-\frac{1}{h_{k}}\int_{a}^{b}P_{n}(y)P_{k}(y)(-\mathrm{v}_{0}'(y))w(y)dy-\frac{\gamma}{h_{k}}\int_{a}^{b}P_{n}(y)P_{k}(y)\frac{w(y)}{y-t}dy\\
&=&-\frac{1}{h_{k}}\int_{a}^{b}P_{n}(y)P_{k}(y)(\mathrm{v}_{0}'(z)-\mathrm{v}_{0}'(y))w(y)dy-\frac{\gamma}{h_{k}}\int_{a}^{b}P_{n}(y)P_{k}(y)\frac{w(y)}{y-t}dy,
\nonumber
\eea
where we have used the formula \cite{ChenFeigin}
$$
\partial_{y}|y-t|^{\gamma}=\delta(y-t)((y-t)^{\gamma}-(t-y)^{\gamma})+\gamma\frac{|y-t|^{\gamma}}{y-t},
$$
which is obtained by writing $|y-t|^{\gamma}=(y-t)^{\gamma}\theta(y-t)+(t-y)^{\gamma}\theta(t-y)$.\\
It follows that
$$
P_{n}'(z)=-\int_{a}^{b}P_{n}(y)\sum_{k=0}^{n-1}\frac{P_{k}(z)P_{k}(y)}{h_{k}}(\mathrm{v}_{0}'(z)-\mathrm{v}_{0}'(y))w(y)dy
-\gamma\int_{a}^{b}P_{n}(y)\sum_{k=0}^{n-1}\frac{P_{k}(z)P_{k}(y)}{h_{k}}\frac{w(y)}{y-t}dy.
$$
By using the Christoffel-Darboux formula,
$$
\sum_{k=0}^{n-1}\frac{P_{k}(z)P_{k}(y)}{h_{k}}=\frac{P_{n}(z)P_{n-1}(y)-P_{n}(y)P_{n-1}(z)}{h_{n-1}(z-y)},
$$
we arrive at equation (\ref{lowering}).
\end{proof}

\begin{proposition}
We have the following two important identities:
\be\label{eq1}
\int_{a}^{b}P_{n}^2(y)\mathrm{v}_{0}'(y)w(y)dy=\gamma\int_{a}^{b}P_{n}^2(y)\frac{w(y)}{y-t}dy,
\ee
\be\label{eq2}
\frac{1}{h_{n-1}}\int_{a}^{b}P_{n}(y)P_{n-1}(y)\mathrm{v}_{0}'(y)w(y)dy=n+\frac{\gamma}{h_{n-1}}\int_{a}^{b}P_{n}(y)P_{n-1}(y)\frac{w(y)}{y-t}dy.
\ee
\end{proposition}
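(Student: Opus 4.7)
The plan is to extract both identities directly from the integration-by-parts computation already carried out in the proof of Theorem~\ref{lo}, by specializing the index and exploiting orthogonality. The formula (\ref{cnk1}) was derived for $k \le n-1$, but the same integration by parts works verbatim for $k = n$, and comparing the two expressions for $C_{n,k}$ at suitable values of $k$ will yield (\ref{eq1}) and (\ref{eq2}).

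For (\ref{eq2}), I would set $k = n-1$ in (\ref{cnk1}). On the right-hand side the $\mathrm{v}_0'(z)$ piece factors out of the integral as $\mathrm{v}_0'(z)\int_a^b P_n(y)P_{n-1}(y)w(y)dy$, which vanishes by orthogonality, leaving precisely the two terms appearing in (\ref{eq2}) divided by $h_{n-1}$. On the other side, $C_{n,n-1}$ can be evaluated from the definition (\ref{cnk}): expanding $P_n'(z)=nz^{n-1}+\cdots$ in the orthogonal basis gives $P_n' = n P_{n-1} + \sum_{k<n-1} c_k P_k$, and orthogonality immediately yields $C_{n,n-1}=n$. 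Equating the two expressions for $C_{n,n-1}$ produces (\ref{eq2}).

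For (\ref{eq1}), I would use the vanishing of the total derivative, $\int_a^b \partial_y\bigl(P_n^2(y)w(y)\bigr)\,dy = 0$, which is legitimate since $w_0(a)=w_0(b)=0$ (and, as I note below, there is no spurious boundary contribution at $y=t$). Expanding the derivative gives $2\int P_n P_n' w\,dy + \int P_n^2 w'\,dy = 0$; the first integral is zero because $P_n'$ has degree $n-1 < n$ and is therefore orthogonal to $P_n$. Hence $\int_a^b P_n^2(y)w'(y)\,dy = 0$. Substituting the formula for $w'(y)$ obtained during the derivation of (\ref{cnk1}) yields exactly (\ref{eq1}). Equivalently, one can view this as the ``$k=n$'' case of the calculation behind (\ref{cnk1}): integrating by parts $\int P_n' P_n w\, dy$ and using $\int P_n P_n' w\,dy = 0$ gives the same conclusion.

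The only delicate point is the apparent internal boundary at $y=t$, where $|y-t|^\gamma$ has a Fisher-Hartwig-type singularity and the formula $\partial_y|y-t|^\gamma = \delta(y-t)\bigl((y-t)^\gamma-(t-y)^\gamma\bigr)+\gamma|y-t|^\gamma/(y-t)$ introduces a delta function. However, for $\gamma>0$ the coefficient $(y-t)^\gamma-(t-y)^\gamma$ vanishes at $y=t$, so the delta function contributes nothing and $|y-t|^\gamma$ is continuous across $t$; this is exactly what made (\ref{cnk1}) go through, so no new obstacle appears. The main ``obstacle'' is therefore just the bookkeeping of recognizing which terms vanish by orthogonality in each specialization; the analytic work has already been done in Theorem~\ref{lo}.
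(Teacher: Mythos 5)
Your proof is correct and follows essentially the same route as the paper: both obtain (\ref{eq1}) and (\ref{eq2}) by specializing the integration-by-parts identity (\ref{cnk1}) to $k=n$ and $k=n-1$ and using $C_{n,n}=0$, $C_{n,n-1}=n$ from orthogonality. Your extra remarks (that (\ref{cnk1}) extends verbatim to $k=n$, and that the delta term is harmless since $(y-t)^{\gamma}-(t-y)^{\gamma}$ vanishes at $y=t$ for $\gamma>0$) merely make explicit points the paper leaves implicit.
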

\begin{proof}
From (\ref{cnk}), we see that
$$
C_{n,n}=0,\;\;C_{n,n-1}=n.
$$
The equations (\ref{eq1}) and (\ref{eq2}) follow from (\ref{cnk1}) if we replace $k$ by $n$ and $n-1$, respectively.
\end{proof}

\begin{theorem}\label{s1s2}
The functions $A_{n}(z)$ and $B_{n}(z)$ satisfy the equations:
\be
B_{n+1}(z)+B_{n}(z)=(z-\alpha_{n})A_{n}(z)-\mathrm{v}_{0}'(z), \tag{$S_{1}$}
\ee
\be
1+(z-\alpha_{n})(B_{n+1}(z)-B_{n}(z))=\beta_{n+1}A_{n+1}(z)-\beta_{n}A_{n-1}(z). \tag{$S_{2}$}
\ee
\end{theorem}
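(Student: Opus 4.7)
My plan is to derive both identities by directly manipulating the integral representations of $A_n(z)$ and $B_n(z)$, invoking the three-term recurrence (\ref{rr}), orthogonality, and the two identities (\ref{eq1})--(\ref{eq2}) from the preceding Proposition.

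For $S_1$: I start from the definition of $B_{n+1}(z) + B_n(z)$ and substitute $P_{n+1}(y) = (y - \alpha_n) P_n(y) - \beta_n P_{n-1}(y)$ into the $B_{n+1}$ integral. Because $\beta_n/h_n = 1/h_{n-1}$ by (\ref{be}), the cross term $-\beta_n P_n(y) P_{n-1}(y)$ exactly cancels $-B_n(z)$, leaving a single integral with factor $(y - \alpha_n) P_n^2(y)$. The elementary splits
\[
\frac{y - \alpha_n}{z - y} = -1 + \frac{z - \alpha_n}{z - y}, \qquad \frac{y - \alpha_n}{(z-y)(y-t)} = -\frac{1}{y-t} + \frac{z - \alpha_n}{(z-y)(y-t)}
\]
isolate the factor $(z - \alpha_n) A_n(z)$. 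The residual terms $\frac{1}{h_n}\int_a^b P_n^2(y) \mathrm{v}_0'(y) w(y) \, dy$ and $\frac{\gamma}{h_n}\int_a^b \frac{P_n^2(y)}{y - t} w(y) \, dy$ cancel one another by (\ref{eq1}), and the remaining piece $-\mathrm{v}_0'(z) \cdot \frac{1}{h_n}\int_a^b P_n^2(y) w(y) \, dy = -\mathrm{v}_0'(z)$ supplies the last term of $S_1$.

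For $S_2$: I compute $\beta_{n+1} A_{n+1}(z) - \beta_n A_{n-1}(z)$ by expanding
\[
P_{n+1}^2(y) = (y - \alpha_n)^2 P_n^2(y) - 2\beta_n(y - \alpha_n) P_n(y) P_{n-1}(y) + \beta_n^2 P_{n-1}^2(y)
\]
inside the $A_{n+1}$ integrand. Since $\beta_n^2/h_n = \beta_n/h_{n-1}$, the $P_{n-1}^2$ piece produces exactly $\beta_n A_{n-1}(z)$ and cancels. In the surviving two pieces I substitute $(y - \alpha_n) = (z - \alpha_n) - (z - y)$, which after grouping like powers of $(z - \alpha_n)$ and using $\beta_n/h_n = 1/h_{n-1}$ yields the leading contribution $(z - \alpha_n)^2 A_n(z) - 2(z - \alpha_n) B_n(z)$. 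Together with $S_1$, the proof of $S_2$ then reduces to the scalar identity
\[
\frac{1}{h_n}\int_a^b (z - y) P_n^2(y)(\mathrm{v}_0'(z) - \mathrm{v}_0'(y)) w(y) \, dy + \frac{\gamma}{h_n}\int_a^b \frac{(z - y) P_n^2(y)}{y - t} w(y) \, dy = (z - \alpha_n) \mathrm{v}_0'(z) + 2n + 1.
\]
The $\mathrm{v}_0'(z)(z - \alpha_n)$ part follows from $\int_a^b (z - y) P_n^2 w \, dy = (z - \alpha_n) h_n$, and the constant $2n + 1$ emerges from inserting $y P_n = P_{n+1} + \alpha_n P_n + \beta_n P_{n-1}$ into $\frac{1}{h_n}\int_a^b y P_n^2 \mathrm{v}_0'(y) w \, dy - \frac{\gamma}{h_n}\int_a^b \frac{y P_n^2(y)}{y - t} w(y) \, dy$ and then applying (\ref{eq1}) at index $n$ together with (\ref{eq2}) at indices $n$ and $n + 1$.

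The main obstacle is the bookkeeping in the $S_2$ derivation: several integrals of the form $\int P_n P_{n \pm 1} \cdot (\text{rational}) \cdot w \, dy$ must be grouped correctly so that the Proposition's identities eliminate all integral remainders and leave only the scalar $2n + 1$. Once this is in place, the rest of the proof is elementary algebra.
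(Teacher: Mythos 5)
Your proof of ($S_{1}$) is correct and essentially identical to the paper's: substituting $P_{n+1}=(y-\alpha_n)P_n-\beta_n P_{n-1}$ and using $\beta_n/h_n=1/h_{n-1}$ is the same step as the paper's identity $\frac{P_{n+1}}{h_n}+\frac{P_{n-1}}{h_{n-1}}=\frac{(y-\alpha_n)P_n}{h_n}$, and the splits of $\frac{y-\alpha_n}{z-y}$ together with orthogonality and (\ref{eq1}) are exactly the paper's argument. For ($S_{2}$), however, you take a genuinely different route. The paper starts from $(z-\alpha_n)(B_{n+1}-B_n)$, writes $z-\alpha_n=(z-y)+(y-\alpha_n)$, and applies the recurrence (\ref{rr}) \emph{linearly} so that the combination $\frac{P_{n+1}^2}{h_n}-\frac{\beta_nP_{n-1}^2}{h_{n-1}}$ emerges and directly reproduces $\beta_{n+1}A_{n+1}-\beta_nA_{n-1}$; the leftover integrals evaluate to $(n+1)-n=1$ by a single application of (\ref{eq2}) at two consecutive indices, and ($S_1$) is never invoked. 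You instead attack the right-hand side $\beta_{n+1}A_{n+1}-\beta_nA_{n-1}$, square the recurrence, and must then feed ($S_1$) back in to convert $(z-\alpha_n)^2A_n$ into $B$-terms, plus verify the auxiliary scalar identity with the constant $2n+1$. I checked your bookkeeping: the $(z-\alpha_n)^1$ remainder contributes $-2(z-\alpha_n)\mathrm{v}_0'(z)$ (via orthogonality and (\ref{eq1})), the $2\beta_n(z-y)P_nP_{n-1}$ piece contributes $-2n$ (via (\ref{eq2})), your scalar identity is correct and contributes $(z-\alpha_n)\mathrm{v}_0'(z)+2n+1$, and the net result is precisely $(z-\alpha_n)(B_{n+1}-B_n)+1$ after using ($S_1$) — so the argument closes, though your summary slightly understates the remainder terms by saying only the scalar $2n+1$ survives (the true surviving constant is $2n+1-2n=1$). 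The paper's linear use of the recurrence is more economical and keeps ($S_2$) independent of ($S_1$); your quadratic expansion costs more algebra but makes transparent why the sum rule ($S_2'$) later follows by multiplying ($S_2$) by $A_n$, since the structure $(z-\alpha_n)^2A_n-2(z-\alpha_n)B_n$ already foreshadows the $B_n^2$ and $\beta_nA_nA_{n-1}$ terms there.
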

\begin{proof}
From the definition of $B_n(z)$, we have
\bea\label{bpb}
B_{n+1}(z)+B_{n}(z)&=&\int_{a}^{b}P_{n}(y)\left(\frac{P_{n+1}(y)}{h_{n}}+\frac{P_{n-1}(y)}{h_{n-1}}\right)
\frac{\mathrm{v}_{0}'(z)-\mathrm{v}_{0}'(y)}{z-y}w(y)dy\nonumber\\
&+&\gamma\int_{a}^{b}P_{n}(y)\left(\frac{P_{n+1}(y)}{h_{n}}+\frac{P_{n-1}(y)}{h_{n-1}}\right)\frac{w(y)}{(z-y)(y-t)}dy.
\eea
It follows from the three-term recurrence relation (\ref{rr}) and (\ref{be}) that
$$
\frac{P_{n+1}(y)}{h_{n}}+\frac{P_{n-1}(y)}{h_{n-1}}=\frac{(y-\alpha_{n})P_{n}(y)}{h_{n}}.
$$
Substituting it into (\ref{bpb}) gives
$$
B_{n+1}(z)+B_{n}(z)=\frac{1}{h_{n}}\int_{a}^{b}(y-\alpha_{n})P_{n}^2(y)
\frac{\mathrm{v}_{0}'(z)-\mathrm{v}_{0}'(y)}{z-y}w(y)dy+\frac{\gamma}{h_{n}}\int_{a}^{b}(y-\alpha_{n})P_{n}^2(y)\frac{w(y)}{(z-y)(y-t)}dy.
$$
Using the definition of $A_n(z)$, it follows that
$$
B_{n+1}(z)+B_{n}(z)-(z-\alpha_{n})A_{n}(z)=-\frac{1}{h_{n}}\int_{a}^{b}\left(\mathrm{v}_{0}'(z)-\mathrm{v}_{0}'(y)\right)P_{n}^2(y)w(y)dy
-\frac{\gamma}{h_{n}}\int_{a}^{b}P_{n}^2(y)\frac{w(y)}{y-t}dy.
$$
From the orthogonality (\ref{ops}) and (\ref{eq1}), we find
$$
B_{n+1}(z)+B_{n}(z)-(z-\alpha_{n})A_{n}(z)=-\mathrm{v}_{0}'(z),
$$
which is just ($S_{1}$).\\
We now turn to prove ($S_{2}$). Similarly, by using the definition of $B_n(z)$, we have
\bea\label{bmb}
(z-\alpha_{n})(B_{n+1}(z)-B_{n}(z))
&=&\Bigg[\int_{a}^{b}(\mathrm{v}_{0}'(z)-\mathrm{v}_{0}'(y))P_{n}(y)
\left(\frac{P_{n+1}(y)}{h_{n}}-\frac{P_{n-1}(y)}{h_{n-1}}\right)w(y)dy\nonumber\\
&+&\gamma\int_{a}^{b}P_{n}(y)\left(\frac{P_{n+1}(y)}{h_{n}}-\frac{P_{n-1}(y)}{h_{n-1}}\right)\frac{w(y)}{y-t}dy\Bigg]\nonumber\\
&+&\Bigg[\int_{a}^{b}\frac{\mathrm{v}_{0}'(z)-\mathrm{v}_{0}'(y)}{z-y}(y-\alpha_{n})P_{n}(y)
\left(\frac{P_{n+1}(y)}{h_{n}}-\frac{P_{n-1}(y)}{h_{n-1}}\right)w(y)dy\nonumber\\
&+&\gamma\int_{a}^{b}(y-\alpha_{n})P_{n}(y)\left(\frac{P_{n+1}(y)}{h_{n}}-\frac{P_{n-1}(y)}{h_{n-1}}\right)\frac{w(y)}{(z-y)(y-t)}dy\Bigg],
\eea
where we write $z-\alpha_{n}=(z-y)+(y-\alpha_{n})$ to get two parts in (\ref{bmb}).\\
From (\ref{rr}), we have
$$
(y-\alpha_{n})P_{n}(y)=P_{n+1}(y)+\beta_{n}P_{n-1}(y).
$$
Substituting it into (\ref{bmb}), we find
\bea
(z-\alpha_{n})(B_{n+1}(z)-B_{n}(z))&=&\int_{a}^{b}(\mathrm{v}_{0}'(z)-\mathrm{v}_{0}'(y))P_{n}(y)
\left(\frac{P_{n+1}(y)}{h_{n}}-\frac{P_{n-1}(y)}{h_{n-1}}\right)w(y)dy\nonumber\\
&+&\gamma\int_{a}^{b}P_{n}(y)\left(\frac{P_{n+1}(y)}{h_{n}}-\frac{P_{n-1}(y)}{h_{n-1}}\right)\frac{w(y)}{y-t}dy\nonumber\\
&+&\int_{a}^{b}\frac{\mathrm{v}_{0}'(z)-\mathrm{v}_{0}'(y)}{z-y}\left(\frac{P_{n+1}^2(y)}{h_{n}}-\frac{\beta_{n}P_{n-1}^2(y)}{h_{n-1}}\right)w(y)dy\nonumber\\
&+&\gamma\int_{a}^{b}\left(\frac{P_{n+1}^2(y)}{h_{n}}-\frac{\beta_{n}P_{n-1}^2(y)}{h_{n-1}}\right)\frac{w(y)}{(z-y)(y-t)}dy.\nonumber
\eea
Using the definition of $A_n(z)$, it follows that
\bea
&&\beta_{n+1}A_{n+1}(z)-\beta_{n}A_{n-1}(z)-(z-\alpha_{n})(B_{n+1}(z)-B_{n}(z))\nonumber\\
&=&\int_{a}^{b}\left(\frac{P_{n}(y)P_{n-1}(y)}{h_{n-1}}-\frac{P_{n+1}(y)P_{n}(y)}{h_{n}}\right)(\mathrm{v}_{0}'(z)-\mathrm{v}_{0}'(y))w(y)dy\nonumber\\
&+&\gamma\int_{a}^{b}\left(\frac{P_{n}(y)P_{n-1}(y)}{h_{n-1}}-\frac{P_{n+1}(y)P_{n}(y)}{h_{n}}\right)\frac{w(y)}{y-t}dy\nonumber\\
&=&-n-[-(n+1)]\nonumber\\
&=&1,\nonumber
\eea
where we have used the orthogonality (\ref{ops}) and (\ref{eq2}). The proof is complete.
\end{proof}

The combination of ($S_{1}$) and ($S_{2}$) produces a sum rule.
\begin{theorem}\label{s2p}
$A_{n}(z)$, $B_{n}(z)$ and $\sum_{j=0}^{n-1}A_{j}(z)$ satisfy the equation
\be
B_{n}^{2}(z)+\mathrm{v}_{0}'(z)B_{n}(z)+\sum_{j=0}^{n-1}A_{j}(z)=\beta_{n}A_{n}(z)A_{n-1}(z). \tag{$S_{2}'$}
\ee
\end{theorem}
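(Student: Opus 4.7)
The plan is to derive $(S_2')$ directly from $(S_1)$ and $(S_2)$ by a one-step telescoping argument in $n$. The decisive move is to multiply $(S_2)$ through by $A_n(z)$ and then use $(S_1)$ to eliminate the factor $(z-\alpha_n)A_n(z)$ produced on the left-hand side; this converts the compatibility pair into a pure first-order difference in $n$ whose right-hand side is $-A_n(z)$, which is precisely the content of $(S_2')$ after summing.

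Concretely, I would first rewrite $(S_1)$ in the form $(z-\alpha_n)A_n(z) = B_{n+1}(z) + B_n(z) + \mathrm{v}_{0}'(z)$. Multiplying $(S_2)$ by $A_n(z)$ gives
$$A_n(z) + (z-\alpha_n)A_n(z)\bigl(B_{n+1}(z) - B_n(z)\bigr) = \beta_{n+1}A_{n+1}(z)A_n(z) - \beta_n A_n(z)A_{n-1}(z),$$
and substituting the expression from $(S_1)$ into the middle term produces $B_{n+1}^2(z) - B_n^2(z) + \mathrm{v}_{0}'(z)\bigl(B_{n+1}(z) - B_n(z)\bigr)$, a clean backward difference in $n$. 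Introducing
$$F_n(z) := B_n^2(z) + \mathrm{v}_{0}'(z)B_n(z) - \beta_n A_n(z) A_{n-1}(z),$$
the identity collapses to $F_{n+1}(z) - F_n(z) = -A_n(z)$, so telescoping from $j=0$ to $j=n-1$ yields $F_n(z) = F_0(z) - \sum_{j=0}^{n-1}A_j(z)$.

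To finish I would verify the boundary value $F_0(z)=0$. Since $P_{-1}(y)\equiv 0$, the integrand defining $B_0(z)$ vanishes and hence $B_0(z)=0$, while the initial-data convention $\beta_0 P_{-1}(x) = 0$ built into the three-term recurrence forces $\beta_0 A_{-1}(z)=0$; rearranging the telescoped identity is then exactly $(S_2')$. The computations are mechanical once the right multiplier has been chosen, so the only genuine conceptual step is recognizing that multiplying $(S_2)$ by $A_n(z)$, rather than by some other natural quantity, is what allows $(S_1)$ to produce a perfect difference; the $n=0$ housekeeping is immediate from the standard orthogonal-polynomial conventions.
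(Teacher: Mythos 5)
Your proposal is correct and follows essentially the same route as the paper: multiply $(S_2)$ by $A_n(z)$, substitute $(z-\alpha_n)A_n(z)=B_{n+1}(z)+B_n(z)+\mathrm{v}_0'(z)$ from $(S_1)$ to obtain a perfect difference in $n$, and telescope. Your explicit check of the boundary terms $B_0(z)=0$ and $\beta_0 A_{-1}(z)=0$ is a small but welcome addition that the paper leaves implicit.
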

\begin{proof}
Multiplying ($S_{2}$) by $A_{n}(z)$ on both sides, we have
$$
A_{n}(z)+(z-\alpha_{n})A_{n}(z)(B_{n+1}(z)-B_{n}(z))=\beta_{n+1}A_{n+1}(z)A_{n}(z)-\beta_{n}A_{n}(z)A_{n-1}(z).
$$
Using ($S_{1}$), the above becomes
$$
A_{n}(z)+\left(B_{n+1}(z)+B_{n}(z)+\mathrm{v}_{0}'(z)\right)(B_{n+1}(z)-B_{n}(z))=\beta_{n+1}A_{n+1}(z)A_{n}(z)-\beta_{n}A_{n}(z)A_{n-1}(z),
$$
namely,
$$
A_{n}(z)+B_{n+1}^2(z)-B_{n}^2(z)+\mathrm{v}_{0}'(z)(B_{n+1}(z)-B_{n}(z))=\beta_{n+1}A_{n+1}(z)A_{n}(z)-\beta_{n}A_{n}(z)A_{n-1}(z).
$$
A telescopic sum gives the desired result.
\end{proof}
\begin{theorem}
The monic orthogonal polynomials $P_{n}(z)$ satisfy the raising operator equation
\be\label{raising}
\left(\frac{d}{dz}-B_{n}(z)-\mathrm{v}_{0}'(z)\right)P_{n-1}(z)=-A_{n-1}(z)P_{n}(z).
\ee
\end{theorem}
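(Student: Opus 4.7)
The plan is to derive the raising operator as a direct algebraic consequence of the lowering operator equation (\ref{lowering}), the three-term recurrence (\ref{rr}), and the compatibility condition $(S_{1})$, with the indices shifted appropriately. This avoids repeating the integration-by-parts argument and instead leverages the structural identities already in hand.

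First, I would write the lowering operator (\ref{lowering}) with $n$ replaced by $n-1$, which gives
$$P_{n-1}'(z) = -B_{n-1}(z)P_{n-1}(z) + \beta_{n-1}A_{n-1}(z)P_{n-2}(z).$$
Then I would rewrite the recurrence (\ref{rr}) at index $n-1$ in the equivalent form $\beta_{n-1}P_{n-2}(z) = (z-\alpha_{n-1})P_{n-1}(z) - P_{n}(z)$ and substitute it into the right-hand side, so that $P_{n-2}$ is eliminated in favor of $P_{n-1}$ and $P_{n}$. This produces
$$P_{n-1}'(z) = -B_{n-1}(z)P_{n-1}(z) + (z-\alpha_{n-1})A_{n-1}(z)P_{n-1}(z) - A_{n-1}(z)P_{n}(z).$$

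Finally, I would apply $(S_{1})$ with $n$ shifted to $n-1$, i.e. $(z-\alpha_{n-1})A_{n-1}(z) = B_{n}(z) + B_{n-1}(z) + \mathrm{v}_{0}'(z)$, to substitute for the product $(z-\alpha_{n-1})A_{n-1}(z)$. The two $B_{n-1}(z)P_{n-1}(z)$ contributions cancel, yielding exactly
$$P_{n-1}'(z) - \bigl(B_{n}(z) + \mathrm{v}_{0}'(z)\bigr)P_{n-1}(z) = -A_{n-1}(z)P_{n}(z),$$
which is the asserted raising operator equation (\ref{raising}).

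The derivation is essentially bookkeeping, so there is no genuine obstacle. The only points requiring care are the correct index shift in $(S_{1})$ and the correct sign when solving (\ref{rr}) for the lower-index polynomial; a convenient internal check is that the $B_{n-1}(z)P_{n-1}(z)$ terms must cancel exactly in the final step, and this cancellation holds precisely when both substitutions have been made consistently.
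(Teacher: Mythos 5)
Your proposal is correct and follows exactly the same route as the paper: shift the lowering relation to index $n-1$, eliminate $P_{n-2}$ via the recurrence, and invoke $(S_{1})$ at index $n-1$ to replace $(z-\alpha_{n-1})A_{n-1}(z)$. No differences worth noting.
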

\begin{proof}
From (\ref{lowering}) we replace $n$ by $n-1$, it reads
\be\label{pnm1}
P_{n-1}'(z)=\beta_{n-1}A_{n-1}(z)P_{n-2}(z)-B_{n-1}(z)P_{n-1}(z).
\ee
The recurrence relation (\ref{rr}) gives
$$
\beta_{n-1}P_{n-2}(z)=(z-\alpha_{n-1})P_{n-1}(z)-P_{n}(z).
$$
Substituting it into (\ref{pnm1}), we obtain
\bea
P_{n-1}'(z)&=&\left[(z-\alpha_{n-1})A_{n-1}(z)-B_{n-1}(z)\right]P_{n-1}(z)-A_{n-1}(z)P_{n}(z)\nonumber\\
&=&(B_{n}(z)+\mathrm{v}_{0}'(z))P_{n-1}(z)-A_{n-1}(z)P_{n}(z).\nonumber
\eea
where we have made use of ($S_{1}$). This completes the proof.
\end{proof}

\begin{theorem}
The monic orthogonal polynomials $P_{n}(z)$ satisfy the second order differential equation
$$
P_{n}''(z)-\left(\mathrm{v}_{0}'(z)+\frac{A_{n}'(z)}{A_{n}(z)}\right)P_{n}'(z)+\left(B_{n}'(z)-B_{n}(z)\frac{A_{n}'(z)}{A_{n}(z)}
+\sum_{j=0}^{n-1}A_{j}(z)\right)P_{n}(z)=0.
$$
\end{theorem}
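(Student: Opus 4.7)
The plan is to derive the second order ODE by combining the lowering operator (\ref{lowering}), the raising operator (\ref{raising}), and the sum rule $(S_2')$. The strategy is standard: the lowering relation lets us write $P_{n-1}(z)$ as a first order differential operator applied to $P_n(z)$, and differentiating that relation once more produces $P_{n-1}'(z)$, which can then be eliminated by the raising relation. The residual algebra must then be cleaned up using $(S_2')$.

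Concretely, I would first solve (\ref{lowering}) for $P_{n-1}(z)$, obtaining
\begin{equation*}
\beta_n P_{n-1}(z) \;=\; \frac{P_n'(z)+B_n(z)P_n(z)}{A_n(z)}.
\end{equation*}
Next, differentiate (\ref{lowering}) in $z$ to get
\begin{equation*}
P_n''(z)+B_n'(z)P_n(z)+B_n(z)P_n'(z)
\;=\;\beta_n A_n'(z)P_{n-1}(z)+\beta_n A_n(z)P_{n-1}'(z).
\end{equation*}
Now substitute the raising operator (\ref{raising}) to replace $P_{n-1}'(z)$, and then use the boxed expression above to replace every remaining $\beta_n P_{n-1}(z)$ by $[P_n'(z)+B_n(z)P_n(z)]/A_n(z)$. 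This yields a relation of the form
\begin{equation*}
P_n''(z)-\left(\mathrm{v}_0'(z)+\frac{A_n'(z)}{A_n(z)}\right)P_n'(z)+\mathcal{C}_n(z)P_n(z)=0,
\end{equation*}
where $\mathcal{C}_n(z)=B_n'(z)-B_n(z)\dfrac{A_n'(z)}{A_n(z)}-B_n^2(z)-\mathrm{v}_0'(z)B_n(z)+\beta_n A_n(z)A_{n-1}(z)$. The coefficient of $P_n'(z)$ already matches the target after simple cancellations; the main work is in the coefficient of $P_n(z)$.

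The final step is to invoke $(S_2')$ from Theorem \ref{s2p}, which rewrites
\begin{equation*}
\beta_n A_n(z)A_{n-1}(z)-B_n^2(z)-\mathrm{v}_0'(z)B_n(z)=\sum_{j=0}^{n-1}A_j(z),
\end{equation*}
and substituting this into $\mathcal{C}_n(z)$ collapses the expression precisely to $B_n'(z)-B_n(z)\dfrac{A_n'(z)}{A_n(z)}+\sum_{j=0}^{n-1}A_j(z)$, as claimed. The main obstacle is bookkeeping rather than conceptual: one must be careful to track the $\mathrm{v}_0'$ contributions coming from the raising operator, since these are exactly what $(S_2')$ is designed to absorb. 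Everything else is straightforward substitution using already established formulas.
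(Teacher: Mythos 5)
Your proposal is correct and follows essentially the same route as the paper: both eliminate $P_{n-1}$ between the lowering relation (\ref{lowering}) and the raising relation (\ref{raising}) — whether you differentiate the lowering relation first or substitute the quotient $P_{n-1}=(P_n'+B_nP_n)/(\beta_n A_n)$ directly into (\ref{raising}) is only a reordering of the same algebra — and both arrive at the identical intermediate coefficient $B_n'-B_nA_n'/A_n+\beta_nA_nA_{n-1}-B_n^2-\mathrm{v}_0'B_n$, which $(S_2')$ collapses to the stated form. No gaps.
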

\begin{proof}
Solving for $P_{n-1}(z)$ from (\ref{lowering}) gives
$$
P_{n-1}(z)=\frac{P_{n}'(z)+B_{n}(z)P_{n}(z)}{\beta_{n}A_{n}(z)}.
$$
Substituting it into (\ref{raising}), we obtain
\bea
&&P_{n}''(z)-\left(\mathrm{v}_{0}'(z)+\frac{A_{n}'(z)}{A_{n}(z)}\right)P_{n}'(z)+\bigg(B_{n}'(z)-B_{n}(z)\frac{A_{n}'(z)}{A_{n}(z)}\nonumber\\
&+&\beta_{n}A_{n}(z)A_{n-1}(z)-B_{n}^{2}(z)-\mathrm{v}_{0}'(z)B_{n}(z)\bigg)P_{n}(z)=0.\nonumber
\eea
Using ($S_{2}'$), we obtain the desired result.
\end{proof}

\noindent $\mathbf{Remark\: 1.}$ In Theorem \ref{lo}, $a, b$ could be $-\infty$ or $\infty$.

\noindent $\mathbf{Remark\: 2.}$ The three identities ($S_{1}$), ($S_{2}$) and ($S_{2}'$) are valid for $z\in \mathbb{C}\cup\{\infty\}$.

\noindent $\mathbf{Remark\: 3.}$ The ladder operator approach has been widely applied to the study of orthogonal polynomials, Hankel determinants and random matrix theory; see \cite{ChenIts,ChenZhang,Dai,Filipuk,Min2018,MinLyuChen,VanAssche} for reference.

\section{Perturbed Laguerre Weight}
In this section, we apply the ladder operators and its supplementary conditions to our problem. For the problem at hand,
$$
w(x)=w_{0}(x)|x-t|^{\gamma}(A+B\theta(x-t)),\;\; x\in[0,\infty),\; t\in[0,\infty),
$$
where
$$
w_{0}(x)=x^{\alpha}\mathrm{e}^{-x},\;\;\mathrm{v}_{0}(x)=x-\alpha\ln x.
$$
It is easy to see that $w(0)=w(\infty)=0$ and
$$
\frac{\mathrm{v}_{0}'(z)-\mathrm{v}_{0}'(y)}{z-y}=\frac{\alpha}{zy}.
$$
From Theorem \ref{lo} we have
\be\label{anz}
A_{n}(z)=\frac{\alpha}{zh_{n}}\int_{0}^{\infty}\frac{P_{n}^{2}(y)w(y)}{y}dy+a_{n}(z,t),
\ee
$$
a_{n}(z,t)=\frac{\gamma}{h_{n}}\int_{0}^{\infty}\frac{P_{n}^{2}(y)}{(z-y)(y-t)}w(y)dy;
$$
\be\label{bnz}
B_{n}(z)=\frac{\alpha}{zh_{n-1}}\int_{0}^{\infty}\frac{P_{n}(y)P_{n-1}(y)w(y)}{y}dy+b_{n}(z,t),
\ee
$$
b_{n}(z,t)=\frac{\gamma}{h_{n-1}}\int_{0}^{\infty}\frac{P_{n}(y)P_{n-1}(y)}{(z-y)(y-t)}w(y)dy.
$$

\begin{theorem}
As $z\rightarrow\infty$, we have
\be\label{anz1}
A_{n}(z)=\frac{1}{z}+\frac{\gamma+t R_{n}(t)}{z^2}+\frac{\gamma\alpha_{n}+\gamma t+t^2 R_n(t)}{z^3}+O\left(\frac{1}{z^4}\right),
\ee
\be\label{bnz1}
B_{n}(z)=-\frac{n}{z}+\frac{t r_{n}(t)}{z^2}+\frac{\gamma\beta_{n}+t^2 r_n(t)}{z^3}+O\left(\frac{1}{z^4}\right),
\ee
where
$$
R_{n}(t):=\frac{\gamma}{h_{n}}\int_{0}^{\infty}\frac{P_{n}^{2}(y)}{y-t}w(y)dy,
$$
$$
r_{n}(t):=\frac{\gamma}{h_{n-1}}\int_{0}^{\infty}\frac{P_{n}(y)P_{n-1}(y)}{y-t}w(y)dy.
$$
\end{theorem}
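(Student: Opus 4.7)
The plan is to plug the expressions (\ref{anz}) and (\ref{bnz}) for $A_n(z)$ and $B_n(z)$ into a geometric expansion in $1/z$, controlling the double‐singularity factor $\tfrac{1}{(z-y)(y-t)}$ via a partial fraction and then using the identities (\ref{eq1}) and (\ref{eq2}) to absorb the $\alpha$–integrals (arising from $\mathrm{v}_{0}'(y)=1-\alpha/y$) into $R_n(t)$ and $r_n(t)$.

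First I would decompose
\be\label{pfrac}
\frac{1}{(z-y)(y-t)}=\frac{1}{z-t}\left(\frac{1}{z-y}+\frac{1}{y-t}\right),
\ee
so the $\gamma$–integrals in (\ref{anz})--(\ref{bnz}) become $\frac{1}{z-t}$ times a sum of the moment–type integral $\frac{\gamma}{h_{n}}\int\frac{P_{n}^{2}w}{z-y}dy$ (respectively for $P_nP_{n-1}$) and the quantity $R_{n}(t)$ (respectively $r_{n}(t)$). Next I would expand $\frac{1}{z-y}=\sum_{k\geq 0}y^{k}/z^{k+1}$ and $\frac{1}{z-t}=\sum_{k\geq 0}t^{k}/z^{k+1}$, and evaluate the first few moments using orthogonality and the three–term recurrence (\ref{rr}): namely
$\int P_{n}^{2}w\,dy=h_n$, $\int yP_{n}^{2}w\,dy=\alpha_n h_n$, $\int y^{2}P_{n}^{2}w\,dy=(\alpha_{n}^{2}+\beta_{n}+\beta_{n+1})h_n$; and $\int P_nP_{n-1}w\,dy=0$, $\int yP_nP_{n-1}w\,dy=\beta_{n}h_{n-1}=h_n$, $\int y^{2}P_nP_{n-1}w\,dy=(\alpha_{n}+\alpha_{n-1})h_n$. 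Multiplying through and collecting powers of $1/z$ produces the contributions from the $\gamma$–integrals; the only nonobvious feature is that the $1/z$ coefficients pick up $R_{n}(t)$ and $r_{n}(t)$ respectively.

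The matching $\alpha/z$ prefactors in (\ref{anz}) and (\ref{bnz}) are then handled by (\ref{eq1}) and (\ref{eq2}) with $\mathrm{v}_{0}'(y)=1-\alpha/y$. Equation (\ref{eq1}) yields $\frac{\alpha}{h_n}\int\frac{P_{n}^{2}w}{y}dy=1-R_{n}(t)$, while (\ref{eq2}) together with $\int P_nP_{n-1}w\,dy=0$ yields $\frac{\alpha}{h_{n-1}}\int\frac{P_nP_{n-1}w}{y}dy=-(n+r_{n}(t))$. These give the first parts of $A_n(z)$ and $B_n(z)$ as $(1-R_n)/z$ and $-(n+r_n)/z$, and the crucial cancellations $(1-R_n)+R_n=1$ and $-(n+r_n)+r_n=-n$ produce the stated leading coefficients $1/z$ and $-n/z$.

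The remaining work is purely bookkeeping: collecting $1/z^{2}$ and $1/z^{3}$ pieces gives $\gamma+tR_n$ and $\gamma\alpha_n+\gamma t+t^{2}R_n$ for $A_n(z)$, and $tr_n$ and $\gamma\beta_n+t^{2}r_n$ for $B_n(z)$. I do not expect a serious obstacle; the only care needed is justifying the asymptotic expansion on the unbounded support $[0,\infty)$ (the integrands decay exponentially in $y$ so the formal series is a genuine asymptotic expansion as $z\to\infty$ off the positive real axis), and checking the computation of $\int y^{2}P_nP_{n-1}w\,dy$ by applying the recurrence (\ref{rr}) once to each of $yP_n$ and $yP_{n-1}$ and keeping only the diagonal cross terms.
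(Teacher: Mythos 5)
Your proposal is correct and follows essentially the same route as the paper: expand in powers of $1/z$, evaluate the low moments by orthogonality and the three-term recurrence, and use the identities (\ref{eq1})--(\ref{eq2}) (equivalently, integration by parts on $dy^{\alpha}$) to get $\frac{\alpha}{h_n}\int_0^\infty \frac{P_n^2 w}{y}dy=1-R_n$ and $\frac{\alpha}{h_{n-1}}\int_0^\infty \frac{P_nP_{n-1}w}{y}dy=-n-r_n$, which produce the cancellations giving the leading coefficients $1$ and $-n$. The only (cosmetic) difference is that you organize the $\gamma$-integrals via the partial fraction $\frac{1}{(z-y)(y-t)}=\frac{1}{z-t}\bigl(\frac{1}{z-y}+\frac{1}{y-t}\bigr)$ and multiply two series, whereas the paper expands $\frac{1}{z-y}$ directly and reduces each weighted moment $\int \frac{y^kP_n^2w}{y-t}dy$ by writing $y^k=(y-t)(\cdots)+t^k$; both yield the stated coefficients, and your second moments $\alpha_n^2+\beta_n+\beta_{n+1}$ and $(\alpha_n+\alpha_{n-1})\beta_n$ are correct but only needed at order $1/z^4$.
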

\begin{proof}
Using integration by parts, we find
\bea
\frac{\alpha}{h_{n}}\int_{0}^{\infty}\frac{P_{n}^{2}(y)w(y)}{y}dy
&=&\frac{1}{h_{n}}\int_{0}^{\infty}P_{n}^{2}(y)\mathrm{e}^{-y}|y-t|^{\gamma}(A+B\theta(y-t))dy^{\alpha}\nonumber\\
&=&1-\frac{\gamma}{h_{n}}\int_{0}^{\infty}\frac{P_{n}^{2}(y)}{y-t}w(y)dy\nonumber\\
&=&1-R_{n}(t)\nonumber
\eea
and
$$
\frac{\alpha}{h_{n-1}}\int_{0}^{\infty}\frac{P_{n}(y)P_{n-1}(y)w(y)}{y}dy=-n-r_{n}(t).
$$
As $z\rightarrow\infty$,
$$
\frac{1}{z-y}=\frac{1}{z}\cdot\frac{1}{1-\frac{y}{z}}=\frac{1}{z}\left(1+\frac{y}{z}+\left(\frac{y}{z}\right)^2+O\left(\frac{1}{z^3}\right)\right)
=\frac{1}{z}+\frac{y}{z^2}+\frac{y^2}{z^3}+O\left(\frac{1}{z^4}\right).
$$
Then we have
$$
a_n(z,t)=\frac{\gamma}{zh_{n}}\int_{0}^{\infty}\frac{P_{n}^{2}(y)}{y-t}w(y)dy+\frac{\gamma}{z^2h_{n}}\int_{0}^{\infty}\frac{yP_{n}^{2}(y)}{y-t}w(y)dy
+\frac{\gamma}{z^3h_{n}}\int_{0}^{\infty}\frac{y^2P_{n}^{2}(y)}{y-t}w(y)dy+O\left(\frac{1}{z^4}\right),
$$
\bea
b_n(z,t)&=&\frac{\gamma}{zh_{n-1}}\int_{0}^{\infty}\frac{P_{n}(y)P_{n-1}(y)}{y-t}w(y)dy+\frac{\gamma}{z^2h_{n-1}}\int_{0}^{\infty}
\frac{yP_{n}(y)P_{n-1}(y)}{y-t}w(y)dy\nonumber\\
&+&\frac{\gamma}{z^3h_{n-1}}\int_{0}^{\infty}\frac{y^2P_{n}(y)P_{n-1}(y)}{y-t}w(y)dy+O\left(\frac{1}{z^4}\right).
\eea
By the definitions of $R_n(t)$ and $r_n(t)$, and using the orthogonality (\ref{ops}), also the recurrence relation (\ref{rr}), we obtain
$$
a_n(z,t)=\frac{R_{n}(t)}{z}+\frac{\gamma+t R_{n}(t)}{z^2}+\frac{\gamma\alpha_{n}+\gamma t+t^2 R_n(t)}{z^3}+O\left(\frac{1}{z^4}\right),
$$
$$
b_{n}(z,t)=\frac{r_n(t)}{z}+\frac{t r_{n}(t)}{z^2}+\frac{\gamma\beta_{n}+t^2 r_n(t)}{z^3}+O\left(\frac{1}{z^4}\right).
$$
According to (\ref{anz}) and (\ref{bnz}), the theorem is established.
\end{proof}
Substituting (\ref{anz1}) and (\ref{bnz1}) into ($S_{1}$), and comparing the coefficients of $\frac{1}{z}$ and $\frac{1}{z^2}$ on both sides respectively, we obtain the following two equations:
\be\label{s11}
\alpha_{n}=2n+1+\alpha+\gamma+t R_n(t),
\ee
\be\label{s12}
r_{n+1}(t)+r_{n}(t)=\gamma+(t-\alpha_{n})R_{n}(t).
\ee
Similarly, substituting (\ref{anz1}) and (\ref{bnz1}) into ($S_{2}$) gives rise to another two equations:
\be\label{s21}
\beta_{n+1}-\beta_{n}=tr_{n+1}(t)-tr_{n}(t)+\alpha_{n},
\ee
\be\label{s22}
(t-\alpha_{n})(r_{n+1}(t)-r_{n}(t))=\beta_{n+1}R_{n+1}(t)-\beta_{n}R_{n-1}(t).
\ee
Using (\ref{sum}), a telescopic sum of (\ref{s21}) gives
\be\label{s23}
\beta_{n}=tr_{n}(t)-\mathrm{p}(n,t).
\ee
Multiplying both sides of (\ref{s22}) by $R_{n}(t)$ and using (\ref{s12}), we have
$$
(r_{n+1}(t)+r_{n}(t)-\gamma)(r_{n+1}(t)-r_{n}(t))=\beta_{n+1}R_{n+1}(t)R_{n}(t)-\beta_{n}R_{n}(t)R_{n-1}(t)
$$
or
$$
r_{n+1}^2(t)-r_{n}^2(t)-\gamma(r_{n+1}(t)-r_{n}(t))=\beta_{n+1}R_{n+1}(t)R_{n}(t)-\beta_{n}R_{n}(t)R_{n-1}(t).
$$
A telescopic sum produces
\be\label{s24}
r_{n}^2(t)-\gamma r_{n}(t)=\beta_{n}R_{n}(t)R_{n-1}(t).
\ee

Substituting (\ref{anz1}) and (\ref{bnz1}) into ($S_{2}'$), noting that the coefficient of $\frac{1}{z}$ is 0 and comparing the coefficients of $\frac{1}{z^2}$ and $\frac{1}{z^3}$ on both sides respectively, we find the following two equations:
\be\label{s31}
n(n+\alpha+\gamma)+tr_n(t)+t\sum_{j=0}^{n-1}R_{j}(t)=\beta_{n},
\ee
\be\label{s32}
n\gamma t+(t^2-2nt-\alpha t)r_n(t)+\gamma\sum_{j=0}^{n-1}\alpha_{j}+t^2\sum_{j=0}^{n-1}R_{j}(t)=\beta_{n}\left(\gamma+tR_{n-1}(t)+tR_n(t)\right).
\ee
It follows from (\ref{sum}) and (\ref{s23}) that
$$
\sum_{j=0}^{n-1}\alpha_{j}=\beta_{n}-tr_{n}(t).
$$
Plugging it into (\ref{s32}) gives
\be\label{s33}
n\gamma+(t-2n-\alpha-\gamma)r_n(t)+t\sum_{j=0}^{n-1}R_{j}(t)=\beta_{n}R_{n-1}(t)+\beta_{n}R_{n}(t).
\ee
Eliminating $t\sum_{j=0}^{n-1}R_{j}(t)$ from (\ref{s31}) and (\ref{s33}), we obtain
\be\label{s34}
\beta_{n}R_{n-1}(t)+\beta_{n}R_{n}(t)=\beta_{n}-(2n+\alpha+\gamma)r_n(t)-n(n+\alpha).
\ee
In the end, we mention that the above identities obtained from ($S_{1}$), ($S_{2}$) and ($S_{2}'$) are very important for the derivation of the fifth Painlev\'{e} equation in next section.

\section{Painlev\'{e} V and Its $\sigma$-Form}
We start from taking a derivative with respect to $t$ in the following equation
$$
\int_{0}^{\infty}P_{n}^2(x,t)x^{\alpha}\mathrm{e}^{-x}|x-t|^{\gamma}(A+B\theta(x-t))dx=h_{n}(t),\;\;n=0,1,2,\ldots,
$$
which gives
$$
h_{n}'(t)=-\gamma\int_{0}^{\infty}P_{n}^2(x)\frac{w(x)}{x-t}dx.
$$
It follows that
\be\label{d1}
\frac{d}{dt}\ln h_{n}(t)=-R_n(t).
\ee
Using (\ref{be}) we have
$$
\frac{d}{dt}\ln \beta_{n}(t)=R_{n-1}(t)-R_n(t).
$$
That is,
\be\label{d11}
\beta_{n}'(t)=\beta_{n}R_{n-1}(t)-\beta_{n}R_n(t).
\ee
We define a quantity allied to the Hankel determinant,
\be\label{hnt}
H_{n}(t):=t\frac{d}{dt}\ln \mathcal{D}_{n}(t).
\ee
It is easy to see from (\ref{hankel}) and (\ref{d1}) that
\be\label{hn}
H_{n}(t)=-t\sum_{j=0}^{n-1}R_j(t).
\ee

On the other hand, taking a derivative with respect to $t$ in the equation
$$
\int_{0}^{\infty}P_{n}(x,t)P_{n-1}(x,t)x^{\alpha}\mathrm{e}^{-x}|x-t|^{\gamma}(A+B\theta(x-t))dx=0,\;\;n=0,1,2,\ldots,
$$
we obtain
\be\label{d2}
\frac{d}{dt}\mathrm{p}(n,t)=r_{n}(t).
\ee

\begin{theorem}
The auxiliary quantities $R_{n}(t)$ and $r_{n}(t)$ satisfy the following coupled Riccati equations:
\be\label{ri1}
t R_{n}'(t)=t R_{n}^2(t)+(2n+\alpha+\gamma-t)R_{n}(t)+2r_{n}(t)-\gamma,
\ee

\be\label{ri2}
t r_n'(t)=\frac{r_{n}^2(t)-\gamma r_{n}(t)}{R_{n}(t)}-\frac{r_{n}^2(t)-\gamma r_{n}(t)+[(2n+\alpha+\gamma)r_{n}(t)+n(n+\alpha)]R_{n}(t)}{1-R_{n}(t)}.
\ee
\end{theorem}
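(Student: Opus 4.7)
The plan is to derive each Riccati equation by differentiating one of the algebraic identities already established (namely $(\ref{s11})$ for $(\ref{ri1})$ and $(\ref{s23})$ for $(\ref{ri2})$), using the derivative formulas $(\ref{d1})$, $(\ref{d11})$, $(\ref{d2})$ to convert $t$-derivatives into expressions in $R_n$ and $r_n$, and then algebraically eliminating the auxiliary quantities $r_{n+1}$, $\beta_n$, and $R_{n-1}$ using the remaining identities from Section 3.

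For $(\ref{ri1})$, I would start by differentiating $(\ref{s11})$ with respect to $t$ to get $\alpha_n'(t) = R_n(t) + t R_n'(t)$. Independently, from $\alpha_n = \mathrm{p}(n,t) - \mathrm{p}(n+1,t)$ in $(\ref{al})$ together with $(\ref{d2})$, I obtain $\alpha_n'(t) = r_n(t) - r_{n+1}(t)$. Setting these equal and using $(\ref{s12})$ to write $r_{n+1}(t) = \gamma + (t - \alpha_n)R_n - r_n$, I reduce everything to $R_n + t R_n' = 2 r_n - \gamma - t R_n + \alpha_n R_n$. Substituting $\alpha_n = 2n+1+\alpha+\gamma + tR_n$ from $(\ref{s11})$ one more time and collecting terms yields $(\ref{ri1})$.

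For $(\ref{ri2})$, the starting point is $(\ref{s23})$, which upon differentiation together with $(\ref{d2})$ gives the clean identity $\beta_n'(t) = t r_n'(t)$. Combining this with $(\ref{d11})$ produces $t r_n'(t) = \beta_n R_{n-1}(t) - \beta_n R_n(t)$, so the problem is reduced to expressing $\beta_n R_{n-1}$ and $\beta_n R_n$ individually in terms of $R_n$ and $r_n$ alone. From $(\ref{s24})$ I read off $\beta_n R_{n-1} = (r_n^2 - \gamma r_n)/R_n$ directly. Substituting this into $(\ref{s34})$ and solving for $\beta_n(1 - R_n)$ gives
\[
\beta_n(1 - R_n) = \frac{r_n^2 - \gamma r_n}{R_n} + (2n+\alpha+\gamma)r_n + n(n+\alpha),
\]
after which $\beta_n R_n$ is obtained by multiplying through by $R_n/(1-R_n)$. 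Subtracting the two expressions yields $(\ref{ri2})$.

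The main obstacle is not conceptual but organizational: the calculation for $(\ref{ri2})$ requires simultaneously using three of the Section~3 identities ($(\ref{s23})$, $(\ref{s24})$, $(\ref{s34})$) together with the two derivative formulas $(\ref{d11})$ and $(\ref{d2})$, and the elimination of $\beta_n$ works precisely because $(\ref{s34})$ gives a linear relation between $\beta_n$ and $\beta_n(R_{n-1}+R_n)$ while $(\ref{s24})$ gives a multiplicative one. I would pay particular attention to the division by $R_n$ and by $1 - R_n$, which are both implicit in the final form; these should be noted as generic non-degeneracy assumptions, justified by the fact that $R_n \not\equiv 0$ (since $\gamma > 0$) and that the right-hand side of $(\ref{ri2})$ extends continuously through any zero of $1-R_n$ via the identity for $\beta_n(1-R_n)$ just derived.
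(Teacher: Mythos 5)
Your proposal is correct and follows essentially the same route as the paper: equation $(\ref{ri1})$ is obtained by equating $\alpha_n'=R_n+tR_n'$ (from $(\ref{s11})$) with $\alpha_n'=r_n-r_{n+1}$ (from $(\ref{al})$ and $(\ref{d2})$) and eliminating $r_{n+1}$ via $(\ref{s12})$, while $(\ref{ri2})$ comes from $tr_n'=\beta_nR_{n-1}-\beta_nR_n$ combined with $(\ref{s24})$ and $(\ref{s34})$. The only cosmetic difference is that you solve for $\beta_n(1-R_n)$ where the paper solves for $\beta_n$ itself before multiplying by $R_n$, which is the same algebra.
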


\begin{proof}
From (\ref{al}) and (\ref{d2}) we have
\be\label{alp}
\alpha_{n}'(t)=r_{n}(t)-r_{n+1}(t).
\ee
Eliminating $\alpha_{n}$ from (\ref{s11}) and (\ref{s12}) gives
\be\label{s13}
r_{n+1}(t)=\gamma+(t-2n-1-\alpha-\gamma-t R_n(t))R_{n}(t)-r_{n}(t).
\ee
Substituting (\ref{s11}) and (\ref{s13}) into (\ref{alp}), we obtain (\ref{ri1}).\\
From (\ref{s23}) and (\ref{d2}), we have
$$
\beta_{n}'(t)=t r_{n}'(t).
$$
Then (\ref{d11}) becomes
\be\label{minus}
t r_{n}'(t)=\beta_{n}R_{n-1}(t)-\beta_{n}R_n(t),
\ee
or
\be\label{d12}
t r_{n}'(t)=\frac{r_{n}^2(t)-\gamma r_{n}(t)}{R_n(t)}-\beta_{n}R_n(t),
\ee
where we have made use of (\ref{s24}).\\
From (\ref{s34}), we find the expression of $\beta_{n}$ in terms of $R_n(t)$ and $r_{n}(t)$ with the aid of (\ref{s24}),
\be\label{expre}
\beta_{n}=\frac{r_n^2(t)-\gamma r_n(t)}{R_n(t)(1-R_n(t))}+\frac{(2n+\alpha+\gamma)r_n(t)+n(n+\alpha)}{1-R_n(t)}.
\ee
Substituting it into (\ref{d12}), we arrive at (\ref{ri2}).
\end{proof}

\begin{theorem}
The quantity $R_n(t)$ satisfies a non-linear second order differential equation,
\bea\label{ode}
&&2t^2R_n(1-R_n)R_n''-t^2(1-2R_n)(R_n')^2+2tR_n(1-R_n)R_n'+2t^2R_n^5+t(4n+2+2\alpha+2\gamma-5t)R_n^4\nonumber\\
&-&4t(2n+1+\alpha+\gamma-t)R_n^3-[t^2-2(2n+1+\alpha+\gamma)t+\alpha^2-\gamma^2]R_n^2-2\gamma^2R_n+\gamma^2=0.
\eea
Let $S_{n}(t):=\frac{R_n(t)}{R_n(t)-1}$, then $S_{n}(t)$ satisfies the second order differential equation,
\be\label{pv}
S_{n}''=\frac{(3S_{n}-1)(S_{n}')^2}{2S_{n}(S_{n}-1)}-\frac{S_{n}'}{t}+\frac{(S_{n}-1)^2}{t^2}\left(\frac{\alpha^{2}S_{n}}{2}
-\frac{\gamma^2}{2S_{n}}\right)-\frac{(2n+1+\alpha+\gamma)S_{n}}{t}-\frac{S_{n}(S_{n}+1)}{2(S_{n}-1)}.
\ee
which is a particular Painlev\'{e} V, $P_{V}\left(\frac{\alpha^2}{2},-\frac{\gamma^2}{2},-(2n+1+\alpha+\gamma),-\frac{1}{2}\right)$, following the convention of \cite{Gromak}.
\end{theorem}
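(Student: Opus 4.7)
The strategy is to eliminate $r_{n}(t)$ between the two coupled Riccati equations (\ref{ri1}) and (\ref{ri2}) to obtain a single second order ODE in $R_{n}(t)$, and then change the dependent variable $R_{n}\mapsto S_{n}=R_{n}/(R_{n}-1)$ to put it into the standard Painlev\'e V normal form. Concretely, (\ref{ri1}) is linear in $r_{n}$, so I would first solve it for $r_{n}$:
\[
r_{n}(t)=\tfrac{1}{2}\Bigl[tR_{n}'(t)-tR_{n}^{2}(t)-(2n+\alpha+\gamma-t)R_{n}(t)+\gamma\Bigr].
\]
Differentiating this once more in $t$ gives an expression for $r_{n}'(t)$ that is linear in $R_{n}''$ and polynomial in $R_{n}$ and $R_{n}'$.

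Next I would substitute both expressions into (\ref{ri2}). After clearing the denominators $R_{n}$ and $1-R_{n}$ (which appear in (\ref{ri2})), the resulting identity is a polynomial relation among $R_{n},R_{n}',R_{n}''$ with $t$-dependent coefficients. Collecting terms according to the powers of $R_{n}$ and the derivatives $R_{n}',R_{n}''$, the coefficient of $R_{n}''$ produces the $2t^{2}R_{n}(1-R_{n})R_{n}''$ term, the coefficient of $(R_{n}')^{2}$ produces the $-t^{2}(1-2R_{n})(R_{n}')^{2}$ term, the remaining first-derivative terms combine into $2tR_{n}(1-R_{n})R_{n}'$, and everything else is a quintic polynomial in $R_{n}$ whose coefficients must match those in (\ref{ode}). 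The identities $r_{n}^{2}-\gamma r_{n}=\beta_{n}R_{n}R_{n-1}$ from (\ref{s24}) and the expression (\ref{expre}) for $\beta_{n}$ are already built into (\ref{ri2}), so no further input from Section~3 is needed at this stage.

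For the Painlev\'e~V step, I would introduce $S_{n}=R_{n}/(R_{n}-1)$, equivalently $R_{n}=S_{n}/(S_{n}-1)$, and compute
\[
R_{n}'=-\frac{S_{n}'}{(S_{n}-1)^{2}},\qquad R_{n}''=-\frac{S_{n}''}{(S_{n}-1)^{2}}+\frac{2(S_{n}')^{2}}{(S_{n}-1)^{3}},
\]
together with $1-R_{n}=-1/(S_{n}-1)$ and $R_{n}(1-R_{n})=-S_{n}/(S_{n}-1)^{2}$. Substituting these into (\ref{ode}) and multiplying through by the appropriate power of $S_{n}-1$ to clear denominators, one obtains (\ref{pv}) after grouping the $(S_{n}')^{2}$, $S_{n}'$ and $S_{n}$-polynomial terms; the constants $(\alpha^{2}/2,-\gamma^{2}/2,-(2n+1+\alpha+\gamma),-1/2)$ are then read off directly.

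\textbf{Main obstacle.} The conceptual content is modest once (\ref{ri1})--(\ref{ri2}) are in hand; the real difficulty is purely bookkeeping. In the substitution into (\ref{ri2}) the terms with denominator $1-R_{n}$ produce, after clearing, polynomials of degree five in $R_{n}$, and the cancellations that yield the clean coefficient $\alpha^{2}-\gamma^{2}$ of $R_{n}^{2}$ and the isolated $-2\gamma^{2}R_{n}+\gamma^{2}$ tail require care: any sign or factor-of-two error in the derivative of $r_{n}$ propagates through the whole polynomial. I would therefore organize the calculation by first verifying the pure-derivative part ($R_{n}''$ and $(R_{n}')^{2}$ coefficients) and then matching the $R_{n}$-polynomial coefficients order by order from $R_{n}^{5}$ down to the constant term, using the already-derived identities as a consistency check.
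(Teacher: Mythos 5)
Your proposal is correct and follows essentially the same route as the paper: solve the Riccati equation (\ref{ri1}) linearly for $r_{n}$, differentiate and substitute into (\ref{ri2}) to eliminate $r_{n}$ and obtain (\ref{ode}), then apply the linear fractional transformation $S_{n}=R_{n}/(R_{n}-1)$ to reach the Painlev\'e V form (\ref{pv}). Your explicit formula for $r_{n}$ and the derivative formulas for $R_{n}'$, $R_{n}''$ in terms of $S_{n}$ are all correct, so the remaining work is exactly the bookkeeping you describe.
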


\begin{proof}
Solving for $r_n(t)$ from (\ref{ri1}) and substituting it into (\ref{ri2}), we obtain (\ref{ode}). After the linear fractional transformation $R_{n}(t)=\frac{S_n(t)}{S_n(t)-1}$ or $S_{n}(t)=\frac{R_n(t)}{R_n(t)-1}$, we arrive at (\ref{pv}).
\end{proof}

\noindent $\mathbf{Remark\: 4.}$ Solving for $R_n(t)$ from (\ref{ri2}) and substituting it into (\ref{ri1}), we can obtain the second order differential equation satisfied by $r_n(t)$. Since this equation is too complicated, we decide not to write it down. Usually the differential equation for $r_n(t)$ is related to the Chazy type equation; see \cite{Lyu2017,Min,Min2018} for reference.

\begin{theorem}
The Hankel determinant $\mathcal{D}_{n}(t)$ admits the following two alternative integral representations in terms of $R_n(t)$ and $S_n(t)$,
\bea
\ln\frac{\mathcal{D}_{n}(t)}{\mathcal{D}_{n}(0)}&=&\int_{0}^{t}\frac{1}{4s R_n(s)\left(R_n(s)-1\right)}\Big\{s^2(R_n'(s))^2-s^2 R_n^4(s)-2s(2n+\alpha+\gamma-s) R_n^3(s)\nonumber\\
&+&\left[2s(2n+\gamma)-(\alpha+\gamma-s)^2\right]R_n^2(s)+2\gamma(\alpha+\gamma-s)R_n(s)-\gamma^2\Big\}ds\nonumber\\
&=&\int_{0}^{t}\frac{1}{4s\:S_n(s)\left(S_n(s)-1\right)^2}\Big\{s^2(S_n'(s))^2-\alpha^2 S_n^4(s)-2\left[(2n+\alpha)s-\alpha^2+\alpha\gamma) \right]S_n^3(s)\nonumber\\
&-&\left[s^2-2s(2n+\alpha-\gamma)+\alpha^2-4\alpha\gamma+\gamma^2\right]S_n^2(s)-2\gamma(\alpha-\gamma-s)S_n(s)-\gamma^2\Big\}ds.\nonumber
\eea
\end{theorem}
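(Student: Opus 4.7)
The plan is to express $\frac{d}{dt}\ln \mathcal{D}_n(t)$ as a rational function of $R_n(t)$ and $R_n'(t)$ only, and then integrate from $0$ to $t$. Combining (\ref{hankel}) and (\ref{d1}) gives $\frac{d}{dt}\ln \mathcal{D}_n(t) = -\sum_{j=0}^{n-1} R_j(t)$, and the telescoped identity (\ref{s31}) evaluates this sum in closed form, yielding
$$\frac{d}{dt}\ln \mathcal{D}_n(t) \;=\; \frac{n(n+\alpha+\gamma) + t\,r_n(t) - \beta_n(t)}{t}.$$
Formula (\ref{expre}) eliminates $\beta_n$ in favour of $R_n$ and $r_n$, while the Riccati equation (\ref{ri1}) solved for $r_n$ writes $r_n = \tfrac{1}{2}\left[tR_n' - tR_n^2 - (2n+\alpha+\gamma-t)R_n + \gamma\right]$. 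Substituting both expressions into the display leaves an explicit rational function of $R_n$ and $R_n'$ alone.

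A useful simplification that avoids painful bookkeeping is to set $A := tR_n'$ and $B := 2r_n - A$, so that $r_n = (A+B)/2$. Then the cross-term $r_n^2 - r_n\cdot A$ which appears after clearing denominators by $R_n(1-R_n)$ collapses to $(B^2 - A^2)/4$; this isolates $(R_n')^2$ as a single square and leaves the remaining contribution polynomial in $R_n$. Collecting by degree in $R_n$, aided by identities such as $-(\alpha+\gamma-t)^2 + 2\gamma(\alpha+\gamma-t) - \gamma^2 = -(\alpha-t)^2$, reproduces the quartic numerator in the first representation. Integration from $0$ to $t$ then completes that part; no constant of integration arises since both sides vanish at $t=0$.

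For the second representation I would apply the linear fractional substitution $R_n = S_n/(S_n-1)$ from the preceding theorem, under which $R_n' = -S_n'/(S_n-1)^2$, $R_n^k = S_n^k/(S_n-1)^k$ and $R_n(R_n-1) = S_n/(S_n-1)^2$. Multiplying the first integrand's numerator and denominator by appropriate powers of $(S_n-1)$ and regrouping by powers of $S_n$ converts it into the claimed form $\frac{1}{4sS_n(S_n-1)^2}\{\cdots\}$. The main obstacle throughout is the purely algebraic bookkeeping: one must verify that each coefficient of $S_n^k$ in the regrouped numerator matches the stated expression. The $S_n^4$ and constant coefficients are short checks (the former reduces to $-\alpha^2$ via the $(\alpha-t)^2$ identity above), but the $S_n^3$, $S_n^2$ and $S_n^1$ terms each receive contributions from several powers of $R_n$ after substitution, and it is in consolidating these that the bulk of the calculation lies.
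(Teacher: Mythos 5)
Your proposal follows the paper's own route exactly: both start from $t\frac{d}{dt}\ln\mathcal{D}_n(t)=n(n+\alpha+\gamma)+tr_n(t)-\beta_n$ (the combination of (\ref{s31}) and (\ref{hn})), eliminate $\beta_n$ via (\ref{expre}) and $r_n$ via the Riccati equation (\ref{ri1}) to land on (\ref{hntex}), and then pass to the second form by the substitution $S_n=R_n/(R_n-1)$ before integrating. The argument is correct; the only additions are your own bookkeeping devices for organizing the algebra.
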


\begin{proof}
Combining (\ref{s31}) and (\ref{hn}), we have
\be\label{exbe}
H_n(t)=n(n+\alpha+\gamma)+tr_n(t)-\beta_{n}.
\ee
Inserting (\ref{expre}) into the above and using (\ref{ri1}) to eliminate $r_n(t)$, we obtain the expression of $H_n(t)$ in terms of $R_n(t)$ and $R_n'(t)$,
\bea\label{hntex}
H_n(t)&=&\frac{1}{4 R_n(t)\left(R_n(t)-1\right) }\Big\{t^2 (R_n'(t))^2-t^2 R_n^4(t)-2 t(2n+\alpha +\gamma-t) R_n^3(t)\nonumber\\ &+&\left[2t(2n+\gamma)-(\alpha+\gamma-t)^2\right]R_n^2(t)+2 \gamma   (\alpha +\gamma -t)R_n(t)-\gamma ^2\Big\}.
\eea
Since $S_{n}(t)=\frac{R_n(t)}{R_n(t)-1}$, we also have
\bea
H_n(t)&=&\frac{1}{4S_n(t)\left(S_n(t)-1\right)^2}\Big\{t^2 (S_n'(t))^2-\alpha^2 S_n^4(t)-2 \left[(2n+\alpha)t-\alpha^2 +\alpha\gamma) \right]S_n^3(t)\nonumber\\
&-&\left[t^2-2 t(2n+\alpha-\gamma)+\alpha ^2-4 \alpha  \gamma +\gamma ^2\right]S_n^2(t)-2 \gamma  (\alpha-\gamma-t)S_n(t)-\gamma ^2\Big\}.\nonumber
\eea
In view of $H_n(t)=t\frac{d}{dt}\ln \mathcal{D}_{n}(t)$, the theorem is established.
\end{proof}
\begin{theorem}\label{thm1}
The quantity $H_{n}(t)$ satisfies a non-linear second order differential equation
\be\label{hd}
(t H_n'')^2=\left[n\gamma-H_n-(2n+\alpha+\gamma-t)H_n'\right]^2-4\left[n(n+\alpha+\gamma)-H_n+tH_n'\right]\left[(H_n')^2-\gamma H_n'\right]
\ee
with the initial conditions $H_n(0)=0$, $H_n'(0)=-\frac{n\gamma}{\alpha+\gamma}$, and also satisfies a non-linear second order difference equation
\bea\label{hnd}
&&\Big\{n\gamma t-t H_n-\left[n(n+\alpha+\gamma)-H_n\right](H_{n-1}-H_{n+1})\Big\}\Big\{(t-n-\alpha-\gamma)\gamma t-t H_n\nonumber\\
&-&\left[n(n+\alpha+\gamma)+\gamma t-H_n\right](H_{n-1}-H_{n+1})\Big\}=\Big[n\gamma t+n(n+\alpha+\gamma)(2n+\alpha+\gamma-t)-(2n+\alpha+\gamma)H_n\Big]\nonumber\\
&\cdot&(2n+\alpha+\gamma-t+H_{n-1}-H_{n+1})(H_n-H_{n+1})(H_{n-1}-H_{n}).
\eea
\end{theorem}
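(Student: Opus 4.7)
The plan is to package everything in Section 3 into two master identities --- one involving $r_n$ linearly (coming from (\ref{s34})) and one involving $r_n$ quadratically (coming from (\ref{s24})) --- and then to eliminate $r_n$ in two different ways: using time derivatives to obtain (\ref{hd}), and using finite differences in $n$ to obtain (\ref{hnd}). The starting point for both halves is the identification $r_n(t)=H_n'(t)$: combining (\ref{d2}) and (\ref{s23}) gives $\beta_n'(t)=tr_n'(t)$, so differentiating (\ref{exbe}) yields $H_n'=r_n+tr_n'-\beta_n'=r_n$. This in turn gives $\beta_n=n(n+\alpha+\gamma)+tH_n'-H_n$.

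For the continuous equation, substituting $r_n=H_n'$ into (\ref{s34}), (\ref{minus}), and (\ref{s24}) yields the three ``boxed'' identities
\begin{align*}
\beta_n(R_n+R_{n-1})&=n\gamma+(t-2n-\alpha-\gamma)H_n'-H_n,\\
\beta_n(R_n-R_{n-1})&=-tH_n'',\\
\beta_nR_nR_{n-1}&=(H_n')^2-\gamma H_n'.
\end{align*}
Multiplying the trivial algebraic identity $4AB=(A+B)^2-(A-B)^2$ (with $A=\beta_nR_n$, $B=\beta_nR_{n-1}$) through by $\beta_n$ and substituting the three expressions above gives (\ref{hd}) after minor rearrangement. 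The initial conditions are then immediate: $H_n(0)=0$ from the factor $t$ in (\ref{hnt}), and $H_n'(0)=r_n(0)$ is found by specialising (\ref{eq2}) at $t=0$ --- the weight collapses to the Laguerre weight $(A+B)x^{\alpha+\gamma}\mathrm{e}^{-x}$, the constant part of $\mathrm{v}_0'(y)=1-\alpha/y$ drops out by orthogonality, and the identity reduces to $-\alpha r_n(0)/\gamma=n+r_n(0)$, giving $r_n(0)=-n\gamma/(\alpha+\gamma)$.

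For the discrete equation, derivatives are replaced by finite differences: from (\ref{hn}) one reads off $R_n=(H_n-H_{n+1})/t$ and $R_{n-1}=(H_{n-1}-H_n)/t$. Substituting these together with $\beta_n=n(n+\alpha+\gamma)+tr_n-H_n$ into (\ref{s34}) produces a \emph{linear} relation for $r_n$ with coefficients rational in $H_{n-1},H_n,H_{n+1}$. Solving it gives $tr_n=N_1/D_1$ and, after subtracting $\gamma t$, $t(r_n-\gamma)=N_2/D_1$, where $D_1=2n+\alpha+\gamma-t+H_{n-1}-H_{n+1}$ and $N_1,N_2$ are exactly the two curly-braced factors on the left-hand side of (\ref{hnd}); the same substitution yields $\beta_n D_1=N_3$ with $N_3=n\gamma t+n(n+\alpha+\gamma)(2n+\alpha+\gamma-t)-(2n+\alpha+\gamma)H_n$, matching the first factor on the right of (\ref{hnd}). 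Inserting all of this into (\ref{s24}) rewritten as $t^2r_n(r_n-\gamma)=\beta_n(H_n-H_{n+1})(H_{n-1}-H_n)$ and clearing the common factor $D_1^2$ yields exactly (\ref{hnd}).

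The main obstacle is entirely computational: matching the algebraically bulky numerators $N_1,N_2,N_3$ with the factors appearing in (\ref{hnd}) requires careful regrouping and sign-tracking, in particular verifying that $\beta_n D_1$ collapses to the stated $N_3$ after cancellation of the $D$-dependent terms. By contrast, the continuous derivation is conceptually and algebraically clean once $r_n=H_n'$ is in hand.
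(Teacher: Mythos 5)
Your proof is correct and follows essentially the same route as the paper: establish $r_n=H_n'$ and $\beta_n=n(n+\alpha+\gamma)-H_n+tH_n'$, combine (\ref{s34}), (\ref{minus}) and (\ref{s24}) via the identity $4\beta_n^2R_nR_{n-1}=[\beta_n(R_n+R_{n-1})]^2-[\beta_n(R_n-R_{n-1})]^2$ for the ODE, and substitute $tR_n=H_n-H_{n+1}$, $tR_{n-1}=H_{n-1}-H_n$ into (\ref{s34}) and (\ref{s24}) for the difference equation. The only (immaterial) deviation is that you obtain $H_n'(0)=-n\gamma/(\alpha+\gamma)$ by specialising (\ref{eq2}) at $t=0$, whereas the paper quotes $R_n(0)=\gamma/(\alpha+\gamma)$ and uses (\ref{hn}); both give the same value.
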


\begin{proof}
From (\ref{s23}) and (\ref{exbe}), we have
$$
\mathrm{p}(n,t)=-n(n+\alpha+\gamma)+H_n(t).
$$
Taking a derivative with respect to $t$ on both sides and noting (\ref{d2}), we find
\be\label{rn}
r_n(t)=H_n'(t).
\ee
It follows from (\ref{exbe}) that
\be\label{beta}
\beta_{n}=n(n+\alpha+\gamma)-H_n(t)+tH_n'(t).
\ee
From (\ref{s34}) and (\ref{minus}), it is easy to get
$$
2\beta_{n}R_n(t)=\beta_{n}-(2n+\alpha+\gamma)r_n(t)-n(n+\alpha)-tr_n'(t),
$$
$$
2\beta_{n}R_{n-1}(t)=\beta_{n}-(2n+\alpha+\gamma)r_n(t)-n(n+\alpha)+tr_n'(t).
$$
The product of the above two equations gives
\be\label{rnbe}
4\beta_{n}(r_{n}^2(t)-\gamma r_{n}(t))=\left[\beta_{n}-(2n+\alpha+\gamma)r_n(t)-n(n+\alpha)\right]^2-(tr_n'(t))^2,
\ee
where we have made use of (\ref{s24}).\\
Substituting (\ref{rn}) and (\ref{beta}) into (\ref{rnbe}), we obtain (\ref{hd}). The initial conditions come from (\ref{hn}) and the fact that $R_n(0)=\frac{\gamma}{\alpha+\gamma}$.

We now turn to prove the difference equation satisfied by $H_n(t)$. From (\ref{hn}) we have
\be\label{rh1}
tR_n(t)=H_n(t)-H_{n+1}(t),
\ee
\be\label{rh2}
tR_{n-1}(t)=H_{n-1}(t)-H_{n}(t).
\ee
Multiplying both sides of (\ref{s34}) by $t$ and substituting (\ref{exbe}), (\ref{rh1}) and (\ref{rh2}) into it, we obtain the expression of $r_n(t)$ in terms of $H_n(t)$,
\be\label{rnt}
t r_n(t)=\frac{n\gamma t-t H_n(t)-\left[n(n+\alpha+\gamma)-H_n(t)\right](H_{n-1}(t)-H_{n+1}(t))}{2n+\alpha+\gamma-t+H_{n-1}(t)-H_{n+1}(t)}.
\ee
It follows from (\ref{exbe}) that
\be\label{betan}
\beta_{n}=\frac{n\gamma t+n(n+\alpha+\gamma)(2n+\alpha+\gamma-t)-(2n+\alpha+\gamma)H_n(t)}{2n+\alpha+\gamma-t+H_{n-1}(t)-H_{n+1}(t)}.
\ee
Finally, multiplying (\ref{s24}) by $t^2$ on both sides and substituting (\ref{rh1}), (\ref{rh2}), (\ref{rnt}) and (\ref{betan}) into it, we obtain (\ref{hnd}). The proof is complete.
\end{proof}

From the above theorem, we readily have the following results, which connect our problem with the Painlev\'{e} equations.
\begin{theorem}\label{thm2}
Let $\sigma_{n}(t):=H_{n}(t)-n\gamma$, then $\sigma_{n}(t)$ satisfies the Jimbo-Miwa-Okamoto $\sigma$-form of the Painlev\'{e} V \cite{Jimbo1981},
$$
(t\sigma_{n}'')^2=\left[\sigma_{n}-t\sigma_{n}'+2(\sigma_{n}')^2+(\nu_{0}+\nu_{1}+\nu_{2}+\nu_{3})\sigma_{n}'\right]^2
-4(\nu_{0}+\sigma_{n}')(\nu_{1}+\sigma_{n}')(\nu_{2}+\sigma_{n}')(\nu_{3}+\sigma_{n}'),
$$
where $\nu_{0}=0,\;\nu_{1}=n,\;\nu_{2}=n+\alpha,\;\nu_{3}=-\gamma$, with the initial conditions $\sigma_n(0)=-n\gamma,\; \sigma_n'(0)=-\frac{n\gamma}{\alpha+\gamma}$.\\
The quantity $\sigma_{n}(t)$ also satisfies a non-linear second order difference equation
\bea
&&\big[t \sigma_n-(n^2+n\alpha-\sigma_n)(2\gamma-\sigma_{n-1}+\sigma_{n+1})\big]\big[(2n+\alpha+\gamma-t)\gamma t+t\sigma_n\nonumber\\
&-&(n^2+n\alpha+\gamma t-\sigma_n)(2\gamma-\sigma_{n-1}+\sigma_{n+1})\big]=\big[n(n+\alpha)(2n+\alpha+\gamma-t)-(2n+\alpha+\gamma)\sigma_n\big]\nonumber\\
&\cdot&(2n+\alpha-\gamma-t+\sigma_{n-1}-\sigma_{n+1})(\gamma-\sigma_n+\sigma_{n+1})(\gamma-\sigma_{n-1}+\sigma_{n}),\nonumber
\eea
which is the discrete $\sigma$-form of the Painlev\'{e} V.
\end{theorem}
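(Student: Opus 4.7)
The plan is to derive both the continuous and discrete $\sigma$-forms in Theorem \ref{thm2} as direct consequences of the two master identities (\ref{hd}) and (\ref{hnd}) of Theorem \ref{thm1} via the constant shift $\sigma_n(t) = H_n(t) - n\gamma$. Since $H_n$ and $\sigma_n$ differ only by a constant, one has $H_n' = \sigma_n'$ and $H_n'' = \sigma_n''$, so the left-hand side $(tH_n'')^2$ of (\ref{hd}) is already $(t\sigma_n'')^2$; the initial data $\sigma_n(0) = -n\gamma$ and $\sigma_n'(0) = -n\gamma/(\alpha+\gamma)$ follow at once from those stated for $H_n$.

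For the continuous form, I would substitute $H_n = \sigma_n + n\gamma$ into (\ref{hd}), using $n\gamma - H_n = -\sigma_n$ and $n(n+\alpha+\gamma) - H_n = n(n+\alpha) - \sigma_n$, so that the right-hand side becomes $\mathcal{C}^2 - 4\mathcal{D}$ with
\[
\mathcal{C} = -\sigma_n - (2n+\alpha+\gamma-t)\sigma_n', \qquad \mathcal{D} = [n(n+\alpha) - \sigma_n + t\sigma_n']\,[(\sigma_n')^2 - \gamma\sigma_n'].
\]
Separately, I would expand the target Jimbo-Miwa-Okamoto right-hand side with $(\nu_0,\nu_1,\nu_2,\nu_3) = (0,n,n+\alpha,-\gamma)$ as $\mathcal{A}^2 - 4\mathcal{B}$, where $\mathcal{A} = \sigma_n - t\sigma_n' + 2(\sigma_n')^2 + (2n+\alpha-\gamma)\sigma_n'$ and $\mathcal{B} = \sigma_n'(\sigma_n'-\gamma)[(\sigma_n')^2 + (2n+\alpha)\sigma_n' + n(n+\alpha)]$. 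A short direct calculation yields $\mathcal{A}+\mathcal{C} = 2\sigma_n'(\sigma_n'-\gamma)$, $\mathcal{A}-\mathcal{C} = 2[\sigma_n + (\sigma_n')^2 + (2n+\alpha-t)\sigma_n']$, and $\mathcal{B}-\mathcal{D} = \sigma_n'(\sigma_n'-\gamma)[\sigma_n + (\sigma_n')^2 + (2n+\alpha-t)\sigma_n']$, from which the factorization $(\mathcal{A}+\mathcal{C})(\mathcal{A}-\mathcal{C}) = 4(\mathcal{B}-\mathcal{D})$ and hence $\mathcal{A}^2 - 4\mathcal{B} = \mathcal{C}^2 - 4\mathcal{D}$ follow immediately.

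For the discrete form, I would use the shifts $H_{n-1} - H_{n+1} = \sigma_{n-1} - \sigma_{n+1} - 2\gamma$, $H_n - H_{n+1} = -(\gamma - \sigma_n + \sigma_{n+1})$, and $H_{n-1} - H_n = -(\gamma - \sigma_{n-1} + \sigma_n)$, and substitute into (\ref{hnd}). The two bracketed factors on the left-hand side each acquire an overall minus sign whose product is $+1$; the prefactor $n\gamma t + n(n+\alpha+\gamma)(2n+\alpha+\gamma-t) - (2n+\alpha+\gamma)H_n$ on the right-hand side collapses to $n(n+\alpha)(2n+\alpha+\gamma-t) - (2n+\alpha+\gamma)\sigma_n$ after absorbing the $-n\gamma(2n+\alpha+\gamma)$ contribution and cancelling the $n\gamma t$; the factor $2n+\alpha+\gamma-t + H_{n-1} - H_{n+1}$ becomes $2n+\alpha-\gamma-t + \sigma_{n-1} - \sigma_{n+1}$; and the two minus signs from $(H_n - H_{n+1})(H_{n-1} - H_n)$ cancel. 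The resulting identity is exactly the claimed discrete $\sigma$-form.

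The argument is thus entirely algebraic and presents no conceptual difficulty; the only real obstacle is careful bookkeeping, in particular verifying the factorization $(\mathcal{A}+\mathcal{C})(\mathcal{A}-\mathcal{C}) = 4(\mathcal{B}-\mathcal{D})$ without a sign error. This same computation is what pins down the parameter assignment $(\nu_0,\nu_1,\nu_2,\nu_3) = (0,n,n+\alpha,-\gamma)$ and explains why the specific shift $H_n - n\gamma$ is the correct normalization for $\sigma_n$.
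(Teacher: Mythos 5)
Your proposal is correct and follows exactly the route the paper intends: Theorem \ref{thm2} is obtained by substituting $\sigma_n(t)=H_n(t)-n\gamma$ (noting $H_{n\pm1}=\sigma_{n\pm1}+(n\pm1)\gamma$ for the shifted indices) into equations (\ref{hd}) and (\ref{hnd}) of Theorem \ref{thm1}, which the paper states without writing out the algebra. Your explicit verification of the factorization $(\mathcal{A}+\mathcal{C})(\mathcal{A}-\mathcal{C})=4(\mathcal{B}-\mathcal{D})$ and of the sign cancellations in the discrete identity checks out.
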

\noindent $\mathbf{Remark\: 5.}$ If $\gamma=0$, then the results in Theorem \ref{thm1} or Theorem \ref{thm2} are coincident with Theorem 8 in Basor and Chen \cite{Basor2009}.

In the end of this section, we show the relation of our Hankel determinant with the Toda molecule equation in the following theorem.
\begin{theorem}
The Hankel determinant $\mathcal{D}_n(t)$ satisfies the following differential-difference equation,
\be\label{td1}
t^2\frac{d^2}{dt^2}\ln\mathcal{D}_n(t)=-n(n+\alpha+\gamma)+\frac{\mathcal{D}_{n+1}(t)\mathcal{D}_{n-1}(t)}{\mathcal{D}_n^2(t)}.
\ee
Furthermore, let $\tilde{\mathcal{D}}_n(t):=t^{-n(n+\alpha+\gamma)}\mathcal{D}_n(t)$, then $\tilde{\mathcal{D}}_n(t)$ satisfies the Toda molecule equation \cite{Sogo}
\be\label{td2}
\frac{d^2}{dt^2}\ln\tilde{\mathcal{D}}_n(t)=\frac{\tilde{\mathcal{D}}_{n+1}(t)\tilde{\mathcal{D}}_{n-1}(t)}{\tilde{\mathcal{D}}_n^2(t)}.
\ee
\end{theorem}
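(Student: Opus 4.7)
The plan is to prove the two equations in order, with the first equation being a direct consequence of material already assembled in Section 4, and the second following from the first by a short algebraic substitution.

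For equation (\ref{td1}), my starting point is the identity $\mathcal{D}_n(t) = \prod_{j=0}^{n-1} h_j(t)$ from (\ref{hankel}), together with $\beta_n = h_n/h_{n-1}$ from (\ref{be}). Telescoping the product gives at once the ``Toda-type'' representation $\beta_n(t) = \mathcal{D}_{n+1}(t)\mathcal{D}_{n-1}(t)/\mathcal{D}_n^2(t)$, which provides the right-hand side of (\ref{td1}) modulo the additive constant. On the left-hand side, I would apply the definition $H_n(t) = t\,\frac{d}{dt}\ln \mathcal{D}_n(t)$ from (\ref{hnt}), which yields the clean identity
\begin{equation*}
t^2\frac{d^2}{dt^2}\ln \mathcal{D}_n(t) = t H_n'(t) - H_n(t).
\end{equation*}
The crucial input is now equation (\ref{beta}) from the proof of Theorem \ref{thm1}, namely $\beta_n = n(n+\alpha+\gamma) - H_n(t) + tH_n'(t)$, which immediately rearranges to $tH_n'(t) - H_n(t) = \beta_n - n(n+\alpha+\gamma)$. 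Combining with the Toda representation of $\beta_n$ produces (\ref{td1}).

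For the Toda molecule equation (\ref{td2}), I would substitute $\tilde{\mathcal{D}}_n(t) = t^{-n(n+\alpha+\gamma)}\mathcal{D}_n(t)$ directly. Taking two logarithmic derivatives gives
\begin{equation*}
\frac{d^2}{dt^2}\ln\tilde{\mathcal{D}}_n(t) = \frac{n(n+\alpha+\gamma)}{t^2} + \frac{d^2}{dt^2}\ln\mathcal{D}_n(t),
\end{equation*}
so that by (\ref{td1}) the left-hand side of (\ref{td2}) equals $t^{-2}\mathcal{D}_{n+1}\mathcal{D}_{n-1}/\mathcal{D}_n^2$. For the right-hand side, the prefactor $t^{-k(k+\alpha+\gamma)}$ contributes an exponent
\begin{equation*}
-(n+1)(n+1+\alpha+\gamma) - (n-1)(n-1+\alpha+\gamma) + 2n(n+\alpha+\gamma) = -2,
\end{equation*}
after a short expansion in which the terms linear in $\alpha+\gamma$ cancel. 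Hence $\tilde{\mathcal{D}}_{n+1}\tilde{\mathcal{D}}_{n-1}/\tilde{\mathcal{D}}_n^2 = t^{-2}\mathcal{D}_{n+1}\mathcal{D}_{n-1}/\mathcal{D}_n^2$, matching the left-hand side.

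There is no real obstacle here, since all the heavy lifting has already been done: the key non-trivial ingredient is the expression (\ref{beta}) for $\beta_n$ in terms of $H_n(t)$ and $H_n'(t)$, which came out of the ladder-operator compatibility conditions in Section 3. The only things worth double-checking are the sign in $\frac{d}{dt}\ln h_n = -R_n$ (so that the sums over $R_j$ telescope with the correct sign into $H_n$) and the bookkeeping of the exponent $-n(n+\alpha+\gamma)$ in the gauge transformation to $\tilde{\mathcal{D}}_n$. The initial data $\mathcal{D}_n(0)$ is not needed for the differential-difference identity itself, since both equations are identities in $t$ once $\mathcal{D}_n(t)$ is defined by (\ref{dnt}).
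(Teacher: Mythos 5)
Your proposal is correct and follows essentially the same route as the paper: both identify $\beta_n=\mathcal{D}_{n+1}\mathcal{D}_{n-1}/\mathcal{D}_n^2$ from (\ref{be}) and (\ref{hankel}), equate it with $\beta_n=n(n+\alpha+\gamma)+t^2\frac{d^2}{dt^2}\ln\mathcal{D}_n(t)$ obtained from (\ref{beta}) and (\ref{hnt}), and then pass to $\tilde{\mathcal{D}}_n$ by the gauge transformation. Your exponent bookkeeping for the $t^{-n(n+\alpha+\gamma)}$ prefactor is accurate, so nothing is missing.
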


\begin{proof}
From (\ref{be}) and (\ref{hankel}), we have
\be\label{b1}
\beta_{n}=\frac{\mathcal{D}_{n+1}(t)\mathcal{D}_{n-1}(t)}{\mathcal{D}_n^2(t)}.
\ee
On the other hand, from (\ref{beta}) and (\ref{hnt}) we find
\be\label{b2}
\beta_n=n(n+\alpha+\gamma)+t^2\frac{d^2}{dt^2}\ln\mathcal{D}_n(t).
\ee
The combination of (\ref{b1}) and (\ref{b2}) gives (\ref{td1}). The equation (\ref{td2}) follows from the transformation $\tilde{\mathcal{D}}_n(t)=t^{-n(n+\alpha+\gamma)}\mathcal{D}_n(t)$. This completes the proof.
\end{proof}

\noindent $\mathbf{Remark\: 6.}$ The Hankel determinant $\mathcal{D}_n(t)$ is related to the $\tau$-function of the Painlev\'{e} V \cite{Okamoto}. See also \cite{Forrester2010,Forrester2002} on the discussion of the $\tau$-functions and the Painlev\'{e} equations.

\section{Asymptotics}
In the limit of large $n$, the eigenvalues (particles) of the Hermitian matrices from a unitary ensemble can be approximated as a continuous fluid with a density $\sigma(x)$ supported in $J$ (a subset of $\mathbb{R}$). When the potential $\mathrm{v}(x):=-\ln w(x)$ is convex and $\mathrm{v}''(x)>0$ in a set of positive measure, $\sigma(x)$ is supported in a single interval $(a,b)$. See \cite{ChenIsmail,Chen1998} for detail.

The equilibrium density $\sigma(x)$ is found to satisfy the following singular integral equation,
$$
\mathrm{v}'(x)-2P\int_{a}^{b}\frac{\sigma(y)}{x-y}dy=0,
$$
where $P$ denotes the principal value integral.
\\
The solution subject to the boundary condition $\sigma(a)=\sigma(b)=0$ reads,
$$
\sigma(x)=\frac{\sqrt{(b-x)(x-a)}}{2\pi^2}P\int_{a}^{b}\frac{\mathrm{v}'(y)}{(y-x)\sqrt{(b-y)(y-a)}}dy
$$
with two supplementary conditions
\be\label{sup1}
\int_{a}^{b}\frac{\mathrm{v}'(x)}{\sqrt{(b-x)(x-a)}}dx=0,
\ee
\be\label{sup2}
\int_{a}^{b}\frac{x\:\mathrm{v}'(x)}{\sqrt{(b-x)(x-a)}}dx=2\pi n.
\ee

For our problem,
$$
w(x,t)=x^{\alpha}\mathrm{e}^{-x}|x-t|^{\gamma}(A+B\theta(x-t))=:\mathrm{e}^{-\mathrm{v}(x)},
$$
where
\be\label{vx}
\mathrm{v}(x)=x-\alpha\ln x-\gamma\ln|x-t|-\theta(x-t)\ln(A+B)-\theta(t-x)\ln A.
\ee
Substituting (\ref{vx}) into (\ref{sup1}) and (\ref{sup2}) and noting that $\frac{d}{dx}\theta(x-t)=\delta(x-t)$, we obtain two equations for the endpoints $a$ and $b$ ($0<a<b$):
\be\label{equ1}
1-\frac{\alpha}{\sqrt{ab}}+\frac{c}{\sqrt{(b-t)(t-a)}}=0,
\ee
\be\label{equ2}
\frac{a+b}{2}-\alpha-\gamma+\frac{c\:t}{\sqrt{(b-t)(t-a)}}=2n,
\ee
where $c:=\frac{1}{\pi}\ln\frac{A}{A+B}$ and we have used the following formulas \cite{ChenMcKay2012,Gradshteyn},
$$
\int_{a}^{b}\frac{1}{\sqrt{(b-x)(x-a)}}dx=\pi,
$$
$$
\int_{a}^{b}\frac{x}{\sqrt{(b-x)(x-a)}}dx=\frac{a+b}{2}\pi,
$$
$$
\int_{a}^{b}\frac{1}{x\sqrt{(b-x)(x-a)}}dx=\frac{\pi}{\sqrt{ab}},\quad (0<a<b),
$$
$$
P\int_{a}^{b}\frac{1}{(x-t)\sqrt{(b-x)(x-a)}}dx=0.
$$

Since $\alpha_{n}\sim\frac{a+b}{2}$ as $n\rightarrow\infty$ \cite{ChenIsmail}, we denote $\tilde{\alpha}_{n}:=\frac{a+b}{2}$. From (\ref{equ1}) and (\ref{equ2}) we obtain a quintic equation satisfied by $\tilde{\alpha}_{n}$,
\be\label{alpha}
(\tilde{\alpha}_{n}-2n-\alpha-\gamma)^2\left[(2\tilde{\alpha}_{n}-t)(\tilde{\alpha}_{n}-t-2n-\alpha-\gamma)^2-\alpha^2t\right]
-c^2t(\tilde{\alpha}_{n}-t-2n-\alpha-\gamma)^2=0.
\ee
In view of the relation (\ref{s11}), letting $\tilde{\alpha}_{n}=2n+\alpha+\gamma+t\tilde{R}_{n}(t)$, we have $\tilde{R}_{n}(t)\sim R_{n}(t)$ as $n\rightarrow\infty$. It follows from (\ref{alpha}) that
\bea\label{ale}
&&2t^2\tilde{R}_{n}^5(t)-t(5t-4n-2\alpha-2\gamma)\tilde{R}_{n}^4(t)+4t(t-2n-\alpha-\gamma)\tilde{R}_{n}^3(t)\nonumber\\
&-&(t^2-4nt-2\alpha t-2\gamma t+\alpha^2+c^2)\tilde{R}_{n}^2(t)+2c^2 \tilde{R}_{n}(t)-c^2=0.
\eea

In the end, we consider the case when $t$ approaches the soft edge, i.e., $n\rightarrow\infty, t=4n+2^{\frac{4}{3}}n^{\frac{1}{3}}s$ and $s$ is fixed. The asymptotic behavior of $R_n(t)$, $r_n(t)$ and $H_n(t)$ is obtained in the following theorem.
\begin{theorem}
Assume that $n\rightarrow\infty, t=4n+2^{\frac{4}{3}}n^{\frac{1}{3}}s$ and $s$ is fixed. Then the large $n$ asymptotics of $R_n(t), r_n(t)$ and $H_n(t)$ are given by
$$
R_n(t)=n^{-\frac{2}{3}}u(s)+n^{-1}v(s)+O(n^{-\frac{4}{3}}),
$$
$$
r_n(t)=n^{\frac{1}{3}}u(s)+2^{-\frac{1}{3}}u'(s)+v(s)+\frac{\gamma}{2}+O(n^{-\frac{1}{3}}),
$$
and
$$
H_n(t)=2 \gamma  n-\frac{2^{\frac{4}{3}} (u'(s))^2+16 u^3(s)-2^{\frac{10}{3}}s\: u^2(s)-\gamma ^2}{4 u(s)}n^{\frac{2}{3}}+O(n^{\frac{1}{3}}),
$$
respectively. Here $u(s)$ and $v(s)$ satisfy the second order differential equations (\ref{us}) and (\ref{vs}), and the large $s$ behavior is given by (\ref{us1}) and (\ref{vs1}). In addition, $\tilde{u}(s):=-2^{\frac{2}{3}}u(s)$ satisfies the Painlev\'{e} XXXIV equation \cite{Ince}
\be\label{p34}
\tilde{u}''(s)=\frac{(\tilde{u}'(s))^2}{2\tilde{u}(s)}+4\tilde{u}^2(s)+2s\tilde{u}(s)-\frac{\gamma^2}{2\tilde{u}(s)}.
\ee
\end{theorem}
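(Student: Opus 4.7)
My plan is to substitute a two-term asymptotic ansatz directly into the scalar ODE (\ref{ode}) satisfied by $R_n(t)$, balance powers of $n$ at the soft edge, and then read off the expansions of $r_n(t)$ from the Riccati relation (\ref{ri1}) and of $H_n(t)$ from (\ref{hntex}). Concretely, set $s=(t-4n)/(2^{4/3}n^{1/3})$, so that $d/dt=(2^{4/3}n^{1/3})^{-1}\,d/ds$, and write $R_n(t)=n^{-2/3}u(s)+n^{-1}v(s)+O(n^{-4/3})$. This gives $R_n'(t)=2^{-4/3}n^{-1}u'(s)+O(n^{-4/3})$ and $R_n''(t)=2^{-8/3}n^{-4/3}u''(s)+O(n^{-5/3})$.

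The key point is that several polynomial coefficients in (\ref{ode}) undergo delicate cancellations at the edge: $t^2-2(2n+1+\alpha+\gamma)t+(\alpha^2-\gamma^2)$ collapses from $O(n^2)$ to $2^{10/3}n^{4/3}s+O(n^{2/3})$, and $-4t(2n+1+\alpha+\gamma-t)$ collapses to $16n^2+O(n^{4/3})$. Tracking the orders, the five terms $2t^2R_nR_n''$, $-t^2(R_n')^2$, $-4t(2n+1+\alpha+\gamma-t)R_n^3$, $-[t^2-2(2n+1+\alpha+\gamma)t+\alpha^2-\gamma^2]R_n^2$, and $\gamma^2$ all contribute at order $n^0$ and produce
\[
2^{7/3}\,u\,u''-2^{4/3}(u')^2+32u^3-2^{10/3}s\,u^2+\gamma^2=0,
\]
which is the advertised equation (\ref{us}); the linear substitution $\tilde u=-2^{2/3}u$ then converts it to exactly $P_{34}$ as in (\ref{p34}). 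Collecting the coefficient of $n^{-1/3}$ in the same expansion yields the linear inhomogeneous ODE (\ref{vs}) for $v(s)$, with coefficients expressible in $u$, $u'$ and $u''$.

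For $r_n(t)$, I would rearrange (\ref{ri1}) as $2r_n=\gamma+tR_n'-tR_n^2-(2n+\alpha+\gamma-t)R_n$, substitute the ansatz, and identify the $n^{1/3}u$ piece from $-(2n+\alpha+\gamma-t)R_n$, the constant $2^{-1/3}u'$ from $tR_n'$, the constant $v$ from the $n^{-1}v$ correction to $R_n$, and the constant $\gamma/2$ from the inhomogeneity. For $H_n(t)$, I would expand (\ref{hntex}): the prefactor is $(4R_n(R_n-1))^{-1}=-n^{2/3}/(4u)+n^{1/3}v/(4u^2)+O(1)$, and the braced expression is $-8\gamma n^{1/3}u+[\,2^{4/3}(u')^2+16u^3-2^{10/3}su^2-8\gamma v-\gamma^2\,]+O(n^{-1/3})$, the dominant $-8\gamma n^{1/3}u$ coming entirely from $2\gamma(\alpha+\gamma-t)R_n$. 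Multiplying these out produces the stated expansion of $H_n$: the $-8\gamma v$ inside the braces is cancelled exactly by the cross-term of the $n^{1/3}v/(4u^2)$ correction in the prefactor with the $-8\gamma n^{1/3}u$ in the braces, so that only $2^{4/3}(u')^2+16u^3-2^{10/3}su^2-\gamma^2$ survives at order $n^{2/3}$.

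The large-$s$ asymptotics (\ref{us1}) and (\ref{vs1}) are fixed by matching with the bulk behaviour encoded in the quintic (\ref{ale}): as $s\to-\infty$ the edge expansion must agree with the smooth Coulomb-fluid solution $\tilde R_n$ of (\ref{ale}) at $t=4n+2^{4/3}n^{1/3}s$ expanded for large $|s|$, and this singles out the correct one-parameter family of solutions of (\ref{us}) and (\ref{vs}). The main obstacle is not conceptual but computational: one must keep the expansions of $R_n$, $R_n'$, $R_n''$ and $(R_n(R_n-1))^{-1}$ correct through several orders and handle the near-cancellations in (\ref{ode}) and (\ref{hntex}) without error. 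A single slip in the subleading coefficients (for instance confusing $2^{10/3}$ and $2^{13/3}$ in the reduction of $t^2-4nt$) would shift the $2s\tilde u$ term of $P_{34}$ to $4s\tilde u$, so most of the labour lies in careful order-by-order bookkeeping.
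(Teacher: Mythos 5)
Your proposal follows essentially the same route as the paper: change variables to $s$, substitute the ansatz $R_n=n^{-2/3}u+n^{-1}v+O(n^{-4/3})$ into (\ref{ode}), read off (\ref{us}) at order $n^{0}$ and (\ref{vs}) at order $n^{-1/3}$, then obtain $r_n$ from (\ref{ri1}) and $H_n$ from (\ref{hntex}); your leading-order equation is just $2^{7/3}$ times (\ref{us}), and the cancellation of the $-8\gamma v$ term in the $H_n$ expansion works exactly as you describe. Two small bookkeeping slips in your prose --- the coefficient $-4t(2n+1+\alpha+\gamma-t)$ collapses to $32n^2$, not $16n^2$, and the remainder in $t^2-2(2n+1+\alpha+\gamma)t+\alpha^2-\gamma^2$ is $O(n)$, not $O(n^{2/3})$ (that $O(n)$ piece is precisely the source of the inhomogeneous term $2^{2/3}(\alpha+\gamma+1)u^2$ in (\ref{vs})) --- do not affect your final displayed equations, which agree with the paper's.
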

\begin{proof}
Let
$$
\hat{R}_n(s):=R_n(4n+2^{\frac{4}{3}}n^{\frac{1}{3}}s).
$$
After change of variable, equation (\ref{ode}) becomes
\bea\label{leq}
&&2\left(2^{\frac{2}{3}}n^{\frac{2}{3}}+s\right)^2\left(1-\hat{R}_n(s)\right) \hat{R}_n(s)  \hat{R}_n''(s)-\left(2^{\frac{2}{3}}n^{\frac{2}{3}}+s\right)^2\left(1-2 \hat{R}_n(s)\right) \left(\hat{R}_n'(s)\right)^2\nonumber\\
&+&\left(2^{\frac{5}{3}}n^{\frac{2}{3}}+2s\right)(1-\hat{R}_n(s)) \hat{R}_n(s)  \hat{R}_n'(s)+2  \left(4n+2^{\frac{4}{3}}n^{\frac{1}{3}}s\right)^2\hat{R}_n^5(s)\nonumber\\
&+&\left(4n+2^{\frac{4}{3}}n^{\frac{1}{3}}s\right) \left(2 \alpha +2 \gamma+2-16n -5\times2^{\frac{4}{3}}n^{\frac{1}{3}} s\right)\hat{R}_n^4(s)\nonumber\\
&-&4  \left(4n+2^{\frac{4}{3}}n^{\frac{1}{3}}s\right) \left(\alpha +\gamma-2 n+1 -2^{\frac{4}{3}}n^{\frac{1}{3}} s\right)\hat{R}_n^3(s)\nonumber\\
&+&\left[\alpha ^2-\gamma ^2-2 (\alpha +\gamma +2 n+1)\left(4n+2^{\frac{4}{3}}n^{\frac{1}{3}}s\right) +\left(4n+2^{\frac{4}{3}}n^{\frac{1}{3}}s\right)^2\right]\hat{R}_n^2(s)\nonumber\\
&-&2 \gamma ^2 \hat{R}_n(s)+\gamma ^2=0.
\eea
We suppose
\be\label{rex}
\hat{R}_n(s)=n^{-\frac{2}{3}}u(s)+n^{-1}v(s)+O(n^{-\frac{4}{3}}),
\ee
which is obtained by observing from the real solution of the algebraic equation (\ref{ale}) after changing variable $t$ to $s$.

Substituting (\ref{rex}) into (\ref{leq}), we obtain
\bea
&&u(s) u''(s)-2^{-1} (u'(s))^2+2^{\frac{8}{3}} u^3(s)-2 s\: u^2(s)+2^{-\frac{7}{3}}\gamma ^2+n^{-\frac{1}{3}}\Big[u(s) v''(s)-u'(s) v'(s)+ u''(s)v(s)\nonumber\\
&+&12\times2^{\frac{2}{3}} u^2(s) v(s)-4 s\: u(s) v(s)+2^{\frac{2}{3}} (\alpha +\gamma +1) u^2(s)\Big]+O(n^{-\frac{2}{3}})=0.\nonumber
\eea
It follows that $u(s)$ and $v(s)$ satisfy the following second order differential equations
\be\label{us}
u(s) u''(s)-2^{-1} (u'(s))^2+2^{\frac{8}{3}} u^3(s)-2 s\: u^2(s)+2^{-\frac{7}{3}}\gamma ^2=0,
\ee
\be\label{vs}
u(s) v''(s)-u'(s) v'(s)+ u''(s)v(s)+12\times2^{\frac{2}{3}} u^2(s) v(s)-4 s\: u(s) v(s)+2^{\frac{2}{3}} (\alpha +\gamma +1) u^2(s)=0.
\ee
From (\ref{us}), we obtain the large $s$ asymptotic of $u(s)$. As $s\rightarrow\infty$,
\be\label{us1}
u(s)=\frac{s}{2^{5/3}}+\frac{1-4 \gamma ^2}{8\times2^{2/3} s^2}-\frac{16 \gamma ^4-40 \gamma ^2+9}{16\times2^{2/3} s^5}-\frac{7 \left(64 \gamma ^6-496 \gamma ^4+876 \gamma ^2-189\right)}{128\times2^{2/3} s^8}+O\left(\frac{1}{s^{11}}\right).
\ee
Substituting (\ref{us1}) into (\ref{vs}), we find the large $s$ asymptotic of $v(s)$. As $s\rightarrow\infty$,
\bea\label{vs1}
v(s)&=&-\frac{1}{4} (\alpha +\gamma +1)-\frac{\left(4 \gamma ^2-1\right) (\alpha +\gamma +1)}{8 s^3}-\frac{5 \left(16 \gamma ^4-40 \gamma ^2+9\right) (\alpha +\gamma +1)}{32 s^6}\nonumber\\
&-&\frac{7 \left(64 \gamma ^6-496 \gamma ^4+876 \gamma ^2-189\right) (\alpha +\gamma +1)}{32 s^9}+O\left(\frac{1}{s^{12}}\right).
\eea
Let
$$
\hat{r}_n(s):=r_n(4n+2^{\frac{4}{3}}n^{\frac{1}{3}}s).
$$
From (\ref{ri1}), we obtain the expression of $\hat{r}_n(s)$ in terms of $\hat{R}_n(s)$,
\be\label{rer}
2\hat{r}_n(s)=(2^{\frac{2}{3}}n^{\frac{2}{3}}+s)\hat{R}_n'(s)-(4n+2^{\frac{4}{3}}n^{\frac{1}{3}}s)\hat{R}_n^2(s)
-(\alpha+\gamma-2n-2^{\frac{4}{3}}n^{\frac{1}{3}}s)\hat{R}_n(s)+\gamma.
\ee
Substituting (\ref{rex}) into (\ref{rer}) gives
$$
\hat{r}_n(s)=n^{\frac{1}{3}}u(s)+2^{-\frac{1}{3}}u'(s)+v(s)+\frac{\gamma}{2}+O(n^{-\frac{1}{3}}).
$$
Similarly, we have the expression of $\hat{H}_n(s):=H_n(4n+2^{\frac{4}{3}}n^{\frac{1}{3}}s)$ in terms of $\hat{R}_n(s)$ from (\ref{hntex}). Using (\ref{rex}), we obtain
$$
\hat{H}_n(s)=2 \gamma  n-\frac{2^{\frac{4}{3}} (u'(s))^2+16 u^3(s)-2^{\frac{10}{3}}s\: u^2(s)-\gamma ^2}{4 u(s)}n^{\frac{2}{3}}+O(n^{\frac{1}{3}}).
$$
In the end, letting $\tilde{u}(s):=-2^{\frac{2}{3}}u(s)$, then it follows from (\ref{us}) that $\tilde{u}(s)$ satisfies the Painlev\'{e} XXXIV equation (\ref{p34}).
\end{proof}

\noindent $\mathbf{Remark\: 7.}$ The Painlev\'{e} XXXIV equation (\ref{p34}) also appeared in the study of the critical edge behavior in quite general unitary random matrix ensembles by Its, Kuijlaars and \"{O}stensson \cite{Its2008, Its2009}, and the study of perturbed Gaussian unitary ensemble with a Fisher-Hartwig singularity by Wu, Xu and Zhao \cite{Wu}.

\section*{Acknowledgments}
Chao Min was supported by the Scientific Research Funds of Huaqiao University under grant number 600005-Z17Y0054.
Yang Chen was supported by the Macau Science and Technology Development Fund under grant numbers FDCT 130/2014/A3, FDCT 023/2017/A1 and by the University of Macau under grant numbers MYRG 2014-00011-FST, MYRG 2014-00004-FST.

\end{document}